\newtheorem{theorem}{Theorem}
\newtheorem{lemma}[theorem]{Lemma}
\begin{document}

\title{Safe Deep Reinforcement Learning for Resource Allocation with Peak Age of Information Violation Guarantees}

\author{Berire Gunes Reyhan and Sinem Coleri,~\IEEEmembership{Fellow,~IEEE}
        \thanks{B. Gunes Reyhan and S. Coleri are with the Department
of Electrical and Electronics Engineering, Koc University, Istanbul, e-mail: \{beriregunes22, scoleri\}@ku.edu.tr. 
Sinem Coleri acknowledges the support of the Scientific and Technological
Research Council of Turkey 2247-A National Leaders Research Grant
\#121C314. }
}

\markboth{}%
{}

\maketitle

\begin{abstract}
In Wireless Networked Control Systems (WNCSs), control and communication systems must be co-designed due to their strong interdependence. This paper presents a novel optimization theory-based safe deep reinforcement learning (DRL) framework for ultra-reliable WNCSs, ensuring constraint satisfaction while optimizing performance, for the first time in the literature. The approach minimizes power consumption under key constraints, including Peak Age of Information (PAoI) violation probability, transmit power, and schedulability in the finite blocklength regime. PAoI violation probability is uniquely derived by combining stochastic maximum allowable transfer interval (MATI) and maximum allowable packet delay (MAD) constraints in a multi-sensor network. The framework consists of two stages: optimization theory and safe DRL. The first stage derives optimality conditions to establish mathematical relationships among variables, simplifying and decomposing the problem. The second stage employs a safe DRL model where a teacher-student framework guides the DRL agent (student). The control mechanism (teacher) evaluates compliance with system constraints and suggests the nearest feasible action when needed. Extensive simulations show that the proposed framework outperforms rule-based and other optimization theory based DRL benchmarks, achieving faster convergence, higher rewards, and greater stability.
\end{abstract}

\begin{IEEEkeywords}
Wireless networked control systems (WNCS), ultra-reliable low latency communication (URLLC), finite blocklength (FBL), optimization theory, safe deep reinforcement learning (DRL), teacher student framework, Peak Age of Information (PAoI) violation probability.
\end{IEEEkeywords}

\setlength{\parindent}{10pt}

\section{Introduction}
\IEEEPARstart{W}{ireless} Networked Control Systems (WNCSs) comprise spatially distributed sensors, actuators, and controllers that communicate via wireless networks \cite{wncs_survey_2018}. Utilizing wireless communication within control systems leads to cost-effective and adaptable network architectures, reducing expenses associated with system component installation, modification, and upgrades when compared to their wired counterparts. As a result, WNCSs have been successfully deployed in a wide range of applications, including automotive electrical systems \cite{ wncs_vehicular_sadi}, avionics control systems \cite{WinNT}, building automation systems \cite{wncs_building_2010}, and industrial automation systems \cite{wncs_industrial_2009}. The main challenge in designing the communication system for a WNCS lies in maintaining control system performance and stability despite the inherent unreliability of wireless transmissions, transmission delays, the shared wireless medium, and the limited battery resources of sensor nodes.

The performance and stability of WNCSs are commonly evaluated using key metrics, including stochastic maximum allowable transfer interval (MATI), maximum allowable packet delay (MAD) \cite{sadi2014minimum, sadi2017joint, hamida_amir_2024}, Age of Information (AoI) \cite{aoi_definition_2016, aoi_discrete_2021}, and Peak Age of Information (PAoI) \cite{aoi_discrete_2021, joint_paoi_fbc_2021, urllc_pai_devassy_2019, paoi_violation_2019, ra_fb_pao_2023, aoi_bound_ailing_2023}. Stochastic MATI entails maintaining the time interval between consecutive state vector reports from sensor nodes to the controller below a specified threshold with a given probability, while MAD denotes the maximum permissible delay for packets transmitted from sensor nodes to the controller \cite{sadi2014minimum, sadi2017joint, hamida_amir_2024}. AoI quantifies the freshness of information at the destination node, with PAoI capturing the maximum AoI immediately before the reception of an update \cite{aoi_definition_2016}. A crucial performance metric in this context is the probability that the PAoI exceeds a predetermined threshold, referred to as the PAoI violation probability. This metric is essential for designing systems that guarantee the freshness of the information at the destination at any given time. 

The PAoI violation probability has been formulated in the literature for both single-node and multi-node systems. \cite{aoi_discrete_2021} examines the distributions of AoI and PAoI in a source-destination communication link for a wide class of discrete time status update systems. In addition, \cite{joint_paoi_fbc_2021} computes PAoI violation probability indirectly by deriving an upper-bound using the Mellin transform for point-to-point communication in the finite blocklength (FBL) regime. Moreover, \cite{urllc_pai_devassy_2019} characterizes PAoI violation probability through the probability-generating function (PGF) of the steady-state PAoI for a single-node system using FBL theory. Extending this analysis to multi-node systems, \cite{paoi_violation_2019} and \cite{ra_fb_pao_2023} also employ PGF to obtain PAoI violation probability. Furthermore, \cite{aoi_bound_ailing_2023} derives an upper bound on PAoI violation probability in the FBL regime for a system comprising multiple unmanned aerial vehicles (UAVs) by applying the Mellin transform. However, none of these studies directly derive the PAoI violation probability for a multi-node system satisfying ultra-reliable low-latency communication (URLLC) requirements within the FBL regime.

WNCS optimization problems aim to determine the optimal values for packet error probability, sampling time, and network scheduling parameters by utilizing various objective and constraint functions. These problems are often solved through either model-based techniques \cite{sadi2014minimum, sadi2017joint,joint_paoi_fbc_2021, ra_fb_pao_2023, ra_rsma_wncs_2024, AoL_wncs, aoi_bound_ailing_2023} or machine learning-based approaches \cite{dl_wncs_2024, drl_wncs_2022, hamida_amir_2024}. Model-based approaches typically involve deriving mathematical models that capture the system behavior and then solving optimization problems using established analytical techniques. In \cite{sadi2014minimum}, the optimization of energy-efficient data transmission in a WNCS is addressed, aiming to minimize power consumption while satisfying constraints related to system stability, schedulability, and maximum transmit power. The authors use relaxation techniques to simplify and solve the complex Integer Programming problem. In \cite{sadi2017joint}, the focus shifts to jointly optimizing energy consumption and control system performance, with the goal of minimizing energy usage while ensuring high reliability and strict delay requirements. The solution employs a combination of optimality conditions, smart enumeration, and heuristic algorithms to achieve near-optimal performance. \cite{joint_paoi_fbc_2021} jointly optimizes PAoI and delay-bound violation probabilities under FBL constraints, employing stochastic network calculus and convex optimization techniques to balance trade-offs between these metrics. In \cite{ra_fb_pao_2023}, resource allocation is optimized to minimize delay and Peak AoI violation probabilities under packet error constraints in URLLC systems using a combination of analytical derivations and numerical optimization. \cite{aoi_bound_ailing_2023} optimizes PAoI violation probability under the constraints of transmit power, UAV trajectories, and decoding error probability, using an iterative approach involving convex optimization. In \cite{ra_rsma_wncs_2024}, resource allocation and user pairing are optimized in rate splitting multiple access (RSMA)-based WNCSs to maximize sum rate while ensuring control stability under imperfect channel state information (CSI), using semi-definite programming relaxation, successive convex approximation, and hypergraph game theory. \cite{AoL_wncs} analyzes Peak Age of Loop (PAoL) in WNCSs under the FBL regime, with closed-form expressions for its average, variance, and outage probability. A PAoL-oriented blocklength adaptation scheme with retransmission is proposed, optimizing uplink (UL) and downlink (DL) blocklengths using convex optimization and Newton’s method while ensuring stability with a Linear Quadratic Regulator (LQR) controller.  However, the high complexity of these model-based techniques limits their applicability in low latency WNCS applications.

To address the complexity associated with model-based approaches, researchers have increasingly turned to machine learning techniques. \cite{dl_wncs_2024} presents a deep reinforcement learning (DRL) based algorithm that optimizes controllers and schedulers by leveraging both model-free and model-based data, incorporating awareness of sensor AoI states and dynamic channel conditions. Similarly, \cite{drl_wncs_2022} proposes a DRL-based co-design strategy for jointly optimizing WNCSs to achieve the best control performance despite network uncertainties like delay and variable throughput. Recently, \cite{hamida_amir_2024} introduces a DRL model grounded in optimization theory for co-designing control and communication systems, where the sampling period, blocklength, and packet error probability are optimized. However, ensuring feasibility and safety in DRL-based optimal resource allocation remains a challenge in WNCSs. While these studies leverage DRL for co-design, they do not incorporate mechanisms to explicitly enforce constraint satisfaction during training and decision-making. In contrast, our approach integrates optimization theory with safe DRL to guarantee constraint compliance throughout the learning process and final policy execution.

Numerous safe reinforcement learning (RL) methods have been explored across various domains. \cite{streaming_saferl_2021} investigates secure video streaming with low energy consumption in rotary-wing UAV-enabled wireless networks. The constraints satisfied with this safe method include secrecy timeout probability (STP) to meet long-term time delay requirements, while optimizing energy efficiency through video quality selection, power allocation, and trajectory optimization. A safe Deep Q-Network (DQN) is employed to develop a policy set using a Lyapunov function, dynamically satisfying the constraints of a Constrained Markov Decision Process (CMDP). However, this method ensures constraint satisfaction only on average over time using Lyapunov optimization, rather than guaranteeing that each decision individually meets all constraints, which may be insufficient for the strict real-time requirements of URLLC. On the other hand, \cite{safe_robot_2022} presents a safety layer for robot navigation, where a convex optimization problem is formulated to identify the safest action that adheres to constraints while minimizing deviation from the current action. This method is effective for convex constraint systems but has been applied
in domains such as robotic navigation, rather than communication networks. In \cite{nguyen2022fuzzy}, fuzzy logic is used to shape the reward function in Q-learning for vehicular crowdsensing. However, designing effective rules is non-trivial in systems with conflicting constraints, which limits their applicability to real-time URLLC scenarios, such as WNCSs. Moreover, \cite{enabling_robust_2022} introduces a teacher-student approach for resource allocation in networking, where the teacher employs simple white-box logic rules to guide the DRL student. While this method allows the teacher to offer precise action advice, the use of simple logic rules falls short in ensuring safety within RL-based URLLC systems. These rules lack the flexibility, adaptability, and scalability required to operate effectively in the complex, dynamic, and uncertain environments that characterize real-time communication networks. Although these safe RL methods have advanced the field, none has been specifically applied to wireless networked control systems (WNCSs) to ensure URLLC under real-time constraints.

This paper presents a safe DRL framework grounded in optimization theory for resource allocation in WNCSs. With the goal of minimizing overall power consumption, resource allocation seeks to identify the optimal values for key control and communication system parameters, including sampling period, blocklength, and packet error probability. The framework addresses constraints, including finite blocklength, PAoI violation probability, maximum transmit power, and schedulability, which are critical for the joint design of control and communication systems. To enhance efficiency, the proposed framework leverages the optimality conditions derived from mathematical system models, reducing the number of decision variables and, consequently, the amount of training data required for the DRL model. Additionally, it employs a teacher-student framework, wherein the teacher intervenes and recommends safe actions when the DRL agent (the student) fails to meet system requirements. The novel contributions of the paper are listed as follows:
\begin{itemize}
  \item We propose a safe DRL framework for the joint optimization of controller and communication systems involving multiple nodes, incorporating PAoI violation probability constraint while considering URLLC conditions within the finite blocklength regime, for the first time in the literature.
  \item We introduce a novel and direct formulation for the PAoI violation probability by integrating stochastic MATI and MAD constraints in a multi-node WNCS under URLLC conditions, utilizing finite blocklength theory, for the first time in the literature. 
  \item We propose a novel safe DRL approach grounded in optimization theory for optimal resource allocation in WNCSs, for the first time in the literature. The proposed approach reduces the training time for DRL by deriving optimality conditions and ensures system requirements are consistently met during exploration by employing safe DRL techniques. 
  \item We evaluate the performance of the proposed optimization theory based safe DRL approach through extensive simulations, comparing it against rule-based DRL and several optimization theory based DRL approaches across various network sizes and constraint parameters.
\end{itemize}

The rest of this paper is organized as follows. In Section \ref{sys_model}, the system model and associated assumptions are described. Section \ref{opt_wncs} presents the joint optimization problem of control and communication systems. The proposed optimization theory based safe DRL approach is provided in Section \ref{safe_drl}. Simulation results are presented in Section \ref{results}. Finally, concluding remarks are given in Section \ref{conclusion}.

\section {System Model and Assumptions} \label{sys_model}

 We consider a Wireless Networked Control System architecture comprising multiple plants, each continuously monitored by a central controller through a wireless network, as illustrated in Fig. \ref{wn_system}. The system consists of $\mathit{N}$ sensors, each of which periodically measures and transmits the corresponding plant state. The transceiver at each sensor node is assumed to have a single antenna. Sensor node $\mathit{i}$ transmits the plant state information in small packets of length $\mathit{L_i}$ by using $\mathit{m_i}$ symbols, referred to as the blocklength, where $\mathit{i} \in \{1,2,\dots, \mathit{N}\}$. Due to the inherent unreliability of the wireless communication channel, there is a possibility that the transmitted data may not reach the controller. For simplicity, packet errors are modeled as a Bernoulli random process. The controller utilizes the most recent state update to calculate and send new control commands to the actuators. The actuators establish and maintain stable communication links with the controller, ensuring the reliable and uninterrupted reception of control commands \cite{ncs_survey_2020}. We denote the sampling period, blocklength, packet transmission delay, and packet error probability of sensor node $\mathit{i}$ by \(\mathit{h_i}\), \(\mathit{m_i}\), \(\mathit{d_i}\), and \(\mathit{p_i}\), respectively, for $\mathit{i} \in \{1,2,\dots, \mathit{N}\}$. 

Time Division Multiple Access (TDMA) is assumed as the channel access method due to its guaranteed delay, making it suitable for industrial control applications \cite{ANSI38}. The channel time is segmented into frames, with each frame containing time slots. The first slot is reserved for the beacon frame. The controller periodically broadcasts this beacon to synchronize and update scheduling across the nodes in the WNCS. Nodes are assigned specific time slots for data transmission, with optimal settings for transmission power, blocklength, and sampling period tailored to each node's requirements. To ensure fixed determinism over the channel coherence time, during which wireless conditions remain stable, each sensor node is assigned a dedicated, fixed-duration time slot. These slots repeat periodically in alignment with the sampling interval, enabling predictable and consistent transmissions. Nodes do not transmit concurrently, and the network manager continuously monitors the packet error rate. Each node can operate in three modes: active (sensing, transmitting, and receiving), sleep, and transient. Energy consumption is considered only during the active mode, as sleep and transient modes consume minimal energy \cite{cui2005energy}.

Next, we discuss the optimization problem formulations related to the performance of the wireless communication and control systems. These include PAoI violation probability, power consumption, and schedulability expressions of the underlying communication system in the finite blocklength regime.

\begin{figure}[htbp]
\centerline{\includegraphics[width = .5\textwidth]{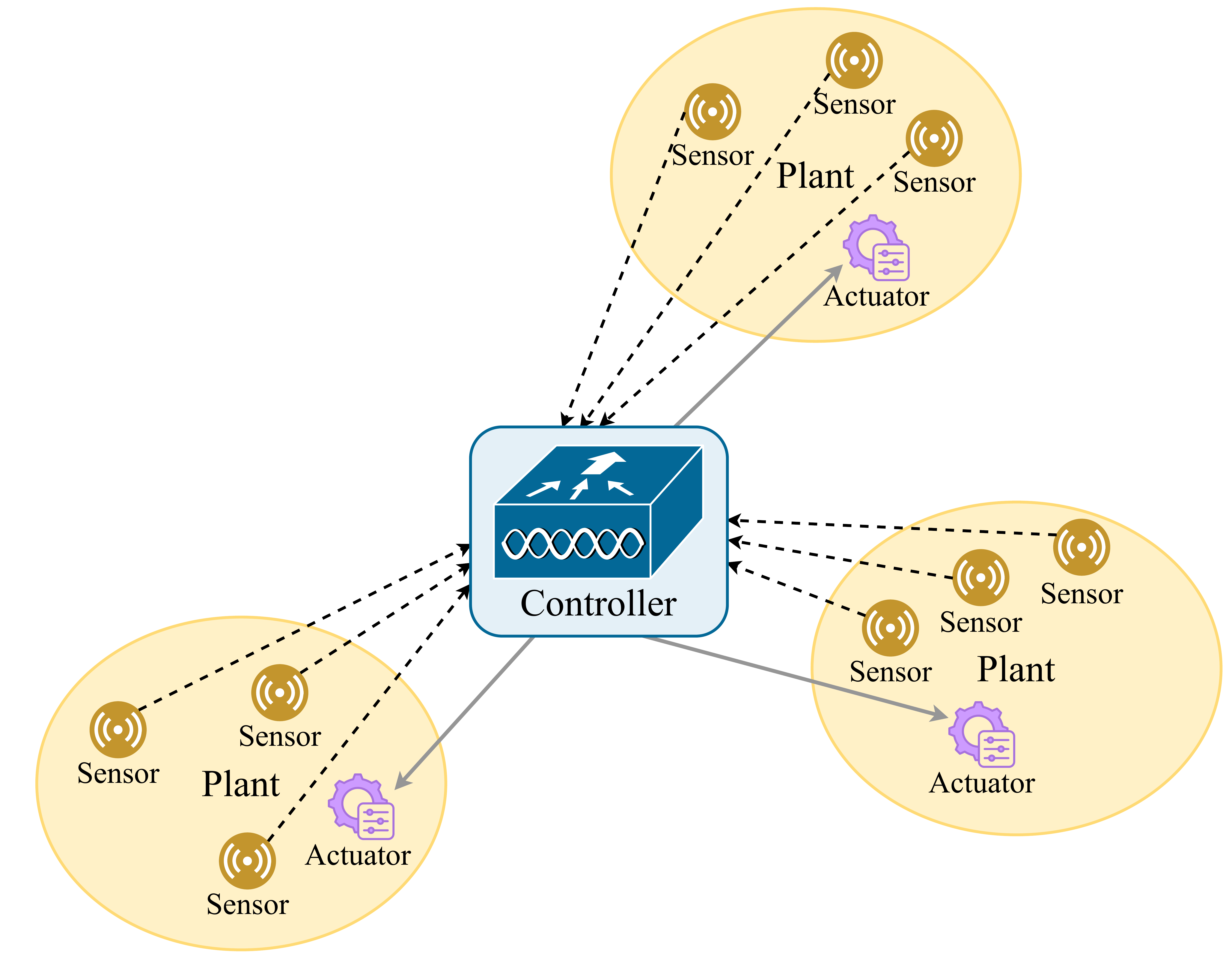}}
\caption{Overview of the WNCS setup.}
\label{wn_system}
\end{figure}

\vspace{-0.2cm}

\subsection{Peak AoI Violation Probability}

AoI is a critical metric that measures the freshness of data in a system by tracking the time elapsed since the last update was received. PAoI represents the maximum AoI just before a new update is received, indicating the worst-case scenario in terms of data freshness. In the considered system, PAoI is a key performance indicator (KPI), since maintaining up-to-date information is vital for timely and accurate decision-making, directly influencing the quality of control decisions and overall system stability. The PAoI violation probability quantifies the likelihood that the PAoI exceeds a predefined critical threshold. Keeping PAoI violation probability under a threshold value ensures that the system maintains real-time data freshness. This constraint is formulated as
\begin{equation}
    P[ \mathit{A_i}(\mathit{h_i}, \mathit{d_i}(\mathit{m_i}), \mathit{p_i}) \geq \mathit{\alpha} ] \leq (1-\mathit{\delta}),
    \label{paoi_violation}
\end{equation}
where $\mathit{A_i}$ denotes the PAoI of sensor node $\mathit{i}$ and is a function of the sampling period $\mathit{h_i}$, packet transmission delay $\mathit{d_i}$, and packet error probability $\mathit{p_i}$; the delay $\mathit{d_i}$ is a function of the blocklength $\mathit{m_i}$; $\mathit{\alpha}$ denotes the endurable peak AoI and $\textstyle (1-\mathit{\delta})$ indicates the maximum tolerated probability for peak AoI violation. The value of $\mathit{\alpha}$ is application-specific and can vary significantly. For some applications, particularly those requiring ultra-low latency, $\mathit{\alpha}$ might be on the order of a few milliseconds, while for others, such as periodic monitoring or less time-sensitive applications, it could extend to several minutes
\cite{paoi_constraint_2023,min_paoi_2021}.

The PAoI violation probability can be rewritten as
\begin{equation}
    P[ \mathit{\mu_i}(\mathit{h_i}, \mathit{d_i}(\mathit{m_i}), \mathit{p_i}) + \mathit{d_i}(\mathit{m_i}) \geq \mathit{\alpha} ] \leq (1-\mathit{\delta}),
    \label{mati_mad_violation}
\end{equation}
where $\mathit{\mu_i}$ denotes the time between two subsequent state vector updates of sensor node $\mathit{i}$. This expression combines MAD and stochastic MATI constraints in \cite{hamida_amir_2024}. For given values of $\mathit{h_i}$ and $\mathit{\alpha}$, there are $\textstyle \left \lfloor  \frac{\mathit{\alpha} - \mathit{d_i}(\mathit{m_i})}{\mathit{h_i}} \right \rfloor$ opportunities of the state vector update reception. We model the packet error $\mathit{p_i}$ as a Bernoulli random process and rewrite the PAoI violation probability constraint as \cite{hamida_amir_2024}
 
\begin{equation}
    \mathit{p_i}^{\left \lfloor  \frac{\mathit{\alpha} - \frac{\mathit{m_i}}{\mathit{B}}}{\mathit{h_i}} \right \rfloor} \leq (1-\mathit{\delta}),
    \label{mati_delay_violation_final}
\end{equation}
where $\mathit{B}$ is the bandwidth of the wireless channel and $\textstyle \mathit{d_i}(\mathit{m_i}) = \frac{\mathit{m_i}}{\mathit{B}}$ accounts for the packet transmission delay based solely on the blocklength $\mathit{m_i}$. Given that fixed determinism is assumed, the queuing delay is excluded from the analysis. 

\subsection{Power Consumption}
Power consumption is a vital consideration in energy-constrained systems with limited energy sources. Efficient power management is essential to maximize the operational lifespan of these systems, ensuring reliable performance over extended periods without the need for frequent recharging or battery replacement. Average power consumption $\mathit{W_i}$ of node $\mathit{i}$ is a function of the sampling period $\mathit{h_i}$, packet error probability $\mathit{p_i}$, and delay $\mathit{d_i}(\mathit{m_i})$ such that:
\begin{equation}
    \mathit{W_i}(\mathit{h_i},\mathit{d_i}(\mathit{m_i}),\mathit{p_i}) = \frac{(\mathit{W_{\text{tx},i}} + \mathit{W_i^c})\mathit{d_i}(\mathit{m_i})}{\mathit{h_i}},
    \label{power}
\end{equation}
where $\mathit{W_{\text{tx},i}}$ is the transmit power of node $\mathit{i}$ and $\mathit{W_i^c}$ is the circuit power consumption when the transmitter is in the active mode. 

The coding rate, also referred to as the data rate of a communication system, is the ratio of the number of information bits to the total number of channel uses (also known as the blocklength). In URLLC scenarios, where the blocklength per frame is small, the decoding error probability becomes significant and cannot be neglected. The coding rate $\mathit{R_i}$ (in bits/sec/Hz) for node $\mathit{i}$ under the finite blocklength regime can be approximated as \cite{hamida_amir_2024}
\begin{equation}
   \mathit{R_i} \approx \log_2(1+\mathit{\gamma_i}) - \sqrt{\frac{\mathit{V_i}}{\mathit{m_i}}} \frac{Q^{-1}(\mathit{\epsilon_i})}{\ln{(2)}},
   \label{coding_rate}
\end{equation}
where $\textstyle \mathit{\gamma_i}= \frac{\mathit{W_{\text{tx},i}} \lvert \mathit{g_i} \rvert }{\mathit{\sigma^2}}$ is the signal-to-noise ratio (SNR) at the controller; $\mathit{g_i}$ is the channel gain from node $\mathit{i}$ to the controller; $\mathit{\sigma^2}$ is the noise power; $ \textstyle \mathit{V_i}=1-(1+\mathit{\gamma_i})^{-2}$ is the channel dispersion; $\mathit{\epsilon_i}$ is the decoding error probability, and $Q^{-1}$ denotes the inverse of the Gaussian $Q$ function. 

In a short frame structure, the frame duration is smaller than the channel coherence time, resulting in a quasi-static channel. Therefore, all symbols within a packet experience nearly identical fading conditions. Considering that decoding errors are the primary cause of packet errors, the decoding error probability can be equated with the packet error probability \cite{durisi2016toward}. As a result, throughout the remainder of this paper, we can safely replace the decoding error probability $\mathit{\epsilon_i}$ with the packet error probability $\mathit{p_i}$ in our equations.

The transmit power $\textstyle \mathit{W_{\text{tx},i}}$ of node $\mathit{i}$ can be extracted from Eq. (\ref{coding_rate}) by substituting $\mathit{R_i}$ with $\textstyle \frac{\mathit{L_i}}{\mathit{m_i}}$ as \cite{hamida_amir_2024} 
\begin{equation}
    \mathit{W_{\text{tx},i}} = \mathit{C_{i1}} \left[\exp\left( \frac{Q^{-1} (\mathit{p_i})}{\sqrt{\mathit{m_i}}} + \frac{\ln(2)\mathit{L_i}}{\mathit{m_i}}\right) - 1\right],
\end{equation}
where $\textstyle \mathit{C_{i1}} = \frac{\mathit{\sigma^2}}{\mathit{|g_i|}}$ and the channel dispersion is approximated as $\mathit{V_i} \approx 1, \forall \mathit{i} \in \mathcal {N}$. While the unit dispersion assumption holds exactly in the high SNR regime, it is also commonly employed in the medium SNR range under practical blocklengths to simplify analysis. This approximation enables tractable problem formulation without significantly compromising accuracy, as supported in prior studies \cite{ren2019joint,snr_approx_2023,channel_fbl_2014}.  By substituting $\mathit{d_i}(\mathit{m_i})$ with $\textstyle \frac{\mathit{m_i}}{\mathit{B}}$, Eq. (\ref{power}) can be rewritten as
\begin{equation}
\begin{split}
        \mathit{W_i}&(\mathit{h_i},\mathit{d_i}(\mathit{m_i}),\mathit{p_i}) =  \\
        & \frac{ \mathit{C_{i1}} \mathit{m_i} \mathit{}}{\mathit{h_i B }} \left[\exp\left( \frac{Q^{-1} (\mathit{p_i})}{\sqrt{\mathit{m_i}}}  +  \frac{\ln(2)\mathit{L_i}}{\mathit{m_i}}  \right) - 1\right] + \frac{\mathit{W_i^c m_i} }{\mathit{h_i B}}.  
\end{split}
\label{total_power}
\end{equation}

\subsection{Schedulability}
The schedulability constraint manages the assignment of transmission times to multiple sensor nodes in a network, taking into account wireless transmission limitations. We assume that no two nodes can transmit simultaneously. The transmissions are scheduled using the Earliest Deadline First (EDF) algorithm, a dynamic scheduling method that prioritizes transmissions based on their proximity to deadlines \cite{Dertouzos46}. We define the schedulability constraint as proposed by \cite{sadi2014minimum}

\begin{equation}
     \sum_{\mathit{i}=1}^{\mathit{N}} \frac{\mathit{d_i}(\mathit{m_i})}{\mathit{h_i}} \leq \mathit{\beta}, 
     \label{schedule_constraint}
\end{equation}
where $\mathit{\beta}$ is the utilization bound, constrained by $\textstyle 0 < \mathit{\beta} \leq 1$. Each node $\mathit{i}$ is allocated a transmission fraction of $\textstyle \frac{\mathit{d_i}(\mathit{m_i})}{\mathit{h_i}}$. Since nodes cannot transmit simultaneously, the sum of these fractions must fit within the total schedule length. Therefore, $\mathit{\beta}$ cannot exceed one.


\section{Joint Optimization of Control and Communication Systems}  \label{opt_wncs}
In this section, we formulate the joint optimization problem for ultra-reliable WNCSs in the finite blocklength regime. The objective is to minimize the power consumption of the communication system while adhering to constraints on maximum blocklength, peak AoI violation probability, maximum transmit power, and schedulability. The problem is formulated as follows:

\begin{subequations}
\begin{align}
    \underset{ \mathit{h_i},\mathit{m_i},\mathit{p_i}  \mathit{i} \in [1,N]}{min} \quad \sum_{\mathit{i}=1}^{\mathit{N}}  \mathit{W_i}(\mathit{h_i},\mathit{d_i}(\mathit{m_i}),\mathit{p_i}) \label{op_objective}
\end{align}
\quad \quad \text{s.t.} 
\begin{equation}
 \quad  \mathit{m_i} \leq \mathit{M_{\text{th}}}, \quad \forall \mathit{i} \in [1,\mathit{N}] 
\label{op_max_blocklength_cons}
\end{equation}
\begin{equation}
 {\left \lfloor  \frac{\mathit{\alpha} - \frac{\mathit{m_i}}{\mathit{B}}}{\mathit{h_i}} \right \rfloor} \ln{\mathit{p_i}} - \ln(1-\mathit{\delta})
\leq 0, \quad \forall \mathit{i} \in [1,\mathit{N}]  \\ \\
\label{op_paoi_violate_cons}
\end{equation}
\begin{equation}
  0 < \mathit{h_i} \leq \left(\mathit{\alpha} - \frac{\mathit{m_i}}{\mathit{B}}\right),\quad \forall \mathit{i} \in [1,\mathit{N}] \\ 
\label{op_sampling_cons}
\end{equation}
\begin{equation}
  0 < \mathit{p_i} < 1, \quad \forall \mathit{i} \in [1,\mathit{N}] \\ 
\label{op_error_prob_cons}
\end{equation}
\begin{equation}
\begin{split}
    \mathit{W_{\text{tx},i}} \leq \mathit{W_{\text{tx}}^{\text{max}}}, \quad \forall \mathit{i} \in [1,\mathit{N}] 
\end{split}
\label{op_max_wt_cons}
\end{equation}
\begin{equation}
    \sum_{\mathit{i}=1}^{\mathit{N}} \frac{\mathit{d_i}(\mathit{m_i})}{\mathit{h_i}} \leq \mathit{\beta}.
\label{op_sched_cons}
\end{equation}
\label{optimization_problem}
\end{subequations} 

The decision variables of the optimization problem are the sampling period $\mathit{h_i}$, packet error probability $\mathit{p_i}$, and blocklength $\mathit{m_i}, \mathit{i} \in [1,\mathit{N}]$. Eq. (\ref{op_objective}) gives the objective of the problem for the minimization of the total power consumption, as derived in Eq. (\ref{total_power}). Eq. (\ref{op_max_blocklength_cons}) represents the maximum blocklength limit, which specifies the maximum number of symbols or channel uses allowed for packet transmission. Eq. (\ref{op_paoi_violate_cons}) represents the PAoI violation probability constraint. Additionally, Eq. (\ref{op_sampling_cons}) states the constraints on sampling periods required to ensure the freshness of data updates. Eq. (\ref{op_error_prob_cons}) provides the limits on error probability. Eq. (\ref{op_max_wt_cons}) is the maximum transmit power constraint to prevent excessive energy consumption, minimize interference with other devices, and comply with regulatory limits. Finally, Eq. (\ref{op_sched_cons}) represents the schedulability constraint.

This optimization problem is a non-convex Mixed-Integer Programming problem; thus, any solution for global optimum is difficult \cite{Convex47}. 

\section {Optimization Theory Based Safe DRL} \label{safe_drl}
Optimization theory-based safe DRL combines domain-specific optimization theory formulation, data-driven deep learning techniques, and a teacher-student architecture that guarantees a secure exploration. This hybrid approach minimizes the amount of training data required in model-agnostic DRL while safely exploring the state space. The approach is divided into two stages: 1) Optimization Theory Stage: Focuses on simplifying the optimization problem using optimality conditions and domain knowledge from communication systems. 2) Safe DRL stage: Involves training the deep learning model with insights from student experiences and optimized teacher advice. The details of these stages are provided in Sections \ref{sec:optimization-theory-stage} and \ref{sec:drl-stage}. 

\subsection{Optimization Theory Stage}\label{sec:optimization-theory-stage}
In the optimization theory stage, we apply optimality conditions for the sampling period, packet error probability, and the blocklength, following the approach outlined in \cite{hamida_amir_2024}. This analysis allows us to focus on the blocklength, $m_i$, and determine the other variables, i.e., sampling period and packet error probability, by using these conditions. 

In Lemma 1, we relate the optimal sampling period and packet error probability to the PAoI violation probability constraint.

\begin{lemma}
Let $\mathit{h_i^*}$, $\mathit{p_i^*}$, and $\mathit{m_i^*}$ denote the optimal values of the sampling period, packet error probability, and blocklength, respectively. The relationship between the optimal sampling period and packet error probability is expressed as
\begin{equation}
    \frac{\mathit{\alpha} - \frac{\mathit{m_i}}{\mathit{B}}}{\mathit{h_i^*}} = \frac{\ln(1-\mathit{\delta})}{\ln{\mathit{p_i^*}}} = \mathit{k_i^*},
\label{opt_ki}
\end{equation}
where $\mathit{k_i^*}$ is a positive integer. 
\end{lemma}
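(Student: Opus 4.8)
The plan is to hold the blocklength $m_i$ at its optimal value $m_i^*$, so that $T_i := \alpha - m_i^*/B$ is a fixed positive constant (positive by feasibility of \eqref{op_sampling_cons}), and to derive both equalities in \eqref{opt_ki} by two independent perturbation (exchange) arguments carried out at a presumed optimum $(h_i^*,p_i^*,m_i^*)$. This framing is natural because the subsequent stage optimizes over $m_i$ while recovering $h_i$ and $p_i$ from these relations. The two monotonicity facts that drive everything are: (i) the per-node objective $W_i$ in \eqref{total_power} is strictly decreasing in $h_i$, since $h_i$ enters only through the factor $1/h_i$ while the coefficient $\frac{m_i}{B}\,(W_{\mathrm{tx},i}+W_i^c)$ is strictly positive; and (ii) $W_i$ is strictly decreasing in $p_i$, because the only $p_i$-dependence is through $Q^{-1}(p_i)$, which is strictly decreasing, so raising $p_i$ shrinks the exponential term and hence $W_{\mathrm{tx},i}$. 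I would establish these two facts first, as they justify pushing $h_i$ and $p_i$ upward.

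Next I would set $n_i^* := \lfloor T_i/h_i^* \rfloor$ and note that $n_i^* \ge 1$, because \eqref{op_sampling_cons} forces $h_i^* \le T_i$, hence $T_i/h_i^* \ge 1$. The first equality follows from a perturbation in $h_i$ alone. By definition of the floor, $T_i/(n_i^*+1) < h_i^* \le T_i/n_i^*$. If the upper bound were strict, i.e. $h_i^* < T_i/n_i^*$, I could raise $h_i$ up to $T_i/n_i^*$: this keeps $\lfloor T_i/h_i\rfloor = n_i^*$, so the PAoI constraint \eqref{op_paoi_violate_cons} is unchanged and the schedulability constraint \eqref{op_sched_cons} only relaxes as $h_i$ grows, while by (i) the objective strictly decreases — contradicting optimality. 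Hence $h_i^* = T_i/n_i^*$, which is exactly $\frac{\alpha - m_i/B}{h_i^*} = n_i^* =: k_i^*$, a positive integer.

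The second equality follows from a perturbation in $p_i$ alone, holding $h_i^*$ (hence $n_i^*$) fixed. Rewriting \eqref{op_paoi_violate_cons} as $(p_i)^{n_i^*} \le 1-\delta$, i.e. $p_i \le (1-\delta)^{1/n_i^*}$, suppose the inequality were strict at $p_i^*$. Since the floor $n_i^*$ is independent of $p_i$, and since raising $p_i$ only relaxes the transmit-power constraint \eqref{op_max_wt_cons} (and leaves \eqref{op_sampling_cons} and \eqref{op_sched_cons} untouched), I could increase $p_i$ up to $(1-\delta)^{1/n_i^*} \in (0,1)$ and, by (ii), strictly reduce the objective — again contradicting optimality. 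Therefore $(p_i^*)^{n_i^*} = 1-\delta$, so $n_i^*\ln p_i^* = \ln(1-\delta)$, i.e. $\frac{\ln(1-\delta)}{\ln p_i^*} = n_i^* = k_i^*$ (using $\ln p_i^* \neq 0$ since $p_i^* < 1$). Combining the two identities yields \eqref{opt_ki}.

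The main obstacle is handling the floor function correctly: its discontinuity rules out a naive Lagrange/stationarity condition, so the argument must instead exploit that, for each fixed integer value $n_i$ of the floor, the largest admissible sampling period is $T_i/n_i$ and the largest admissible error probability is $(1-\delta)^{1/n_i}$. I would therefore take care to verify that each perturbation preserves the floor value $n_i^*$ and respects the remaining constraints \eqref{op_sampling_cons}, \eqref{op_sched_cons}, and \eqref{op_max_wt_cons}, so that the strict monotonicity of $W_i$ alone forces both the sampling-period boundary relation and the PAoI inequality to be tight at the optimum.
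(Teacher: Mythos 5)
Your proposal is correct and follows essentially the same route as the paper's own proof: a contradiction argument that perturbs $h_i$ upward to make $\frac{\alpha - m_i/B}{h_i}$ hit the integer floor value, then perturbs $p_i$ upward to make the PAoI constraint tight, each time invoking the strict monotone decrease of $W_i$ in $h_i$ and $p_i$ while checking that the remaining constraints \eqref{op_sampling_cons}, \eqref{op_max_wt_cons}, and \eqref{op_sched_cons} are preserved. Your version is somewhat more explicit than the paper's (introducing $n_i^*=\lfloor T_i/h_i^*\rfloor$, verifying $n_i^*\ge 1$, and spelling out why the floor value is unchanged under each perturbation), but the underlying idea is identical.
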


\begin{proof} 
We prove the Lemma by contradiction, following the methods in  \cite{hamida_amir_2024} and \cite{sadi2014minimum}. We start by assuming that $\textstyle \frac{\mathit{\alpha} - \frac{\mathit{m_i}}{\mathit{B}}}{\mathit{h_i^*}}$ is not a positive integer such that $ \textstyle \left \lfloor \frac{\mathit{\alpha} - \frac{\mathit{m_i}}{\mathit{B}}}{\mathit{h_i^*}} \right \rfloor < \frac{\mathit{\alpha} - \frac{\mathit{m_i}}{\mathit{B}}}{\mathit{h_i^*}}$. If we increase $\mathit{h_i^*}$ to make $\textstyle \left \lfloor \frac{\mathit{\alpha} - \frac{\mathit{m_i}}{\mathit{B}}}{\mathit{h_i^*}} \right \rfloor = \frac{\mathit{\alpha} - \frac{\mathit{m_i}}{\mathit{B}}}{\mathit{h_i^*}}$, while ensuring the upper bound in Eq. (\ref{op_sampling_cons}) is not violated, the constraint in Eq. (\ref{op_paoi_violate_cons}) remains satisfied, as $ \textstyle \left \lfloor \frac{\mathit{\alpha} - \frac{\mathit{m_i}}{\mathit{B}}}{\mathit{h_i^*}} \right \rfloor$ does not change, and the constraint (\ref{op_sched_cons}) is still met. However, the objective function in Eq. (\ref{op_objective}) decreases since it is a monotonically decreasing function of $\mathit{h_i}$. Thus, keeping the inequality would contradict the optimality of $\mathit{h_i^*}$, proving that the equality must hold.

Similarly, we can prove $\textstyle \frac{\mathit{\alpha} - \frac{\mathit{m_i}}{\mathit{B}}}{\mathit{h_i^*}} = \frac{\ln(1-\mathit{\delta})}{\ln{\mathit{p_i^*}}}$ by contradiction. Suppose $\textstyle \frac{\mathit{\alpha} - \frac{\mathit{m_i}}{\mathit{B}}}{\mathit{h_i^*}} > \frac{\ln(1-\mathit{\delta})}{\ln{\mathit{p_i^*}}}$. By increasing $\mathit{p_i^*}$ such that $\textstyle \frac{\mathit{\alpha} - \frac{\mathit{m_i}}{\mathit{B}}}{\mathit{h_i^*}} = \frac{\ln(1-\mathit{\delta})}{\ln{\mathit{p_i^*}}}$, the constraint in Eq. (\ref{op_max_wt_cons}) still holds since the power consumption of node $\mathit{i}$ is a non-increasing function of $\mathit{p_i}$. Nonetheless, the power consumption function in Eq. (\ref{op_objective}) decreases since it is a monotonically decreasing function of $\mathit{p_i}$. Since reducing power consumption is the optimization goal, allowing this inequality would imply a suboptimal choice of $\mathit{p_i^*}$, confirming that the equality must hold in the optimal solution.
\end{proof}

\begin{lemma}
The optimal value of $\mathit{k_i}$ in Lemma 1, denoted as $\mathit{k_i^*}$, is expressed as a function of $\mathit{m_i}$ and given by
\begin{equation}
\begin{split}
    \mathit{k_i^*} &= \max \\
    &\left[1, \left \lfloor \quad  \frac{\ln(1-\mathit{\delta})} {\ln \left[Q \left [\sqrt{\mathit{m_i}} \ln \left(\frac{\mathit{W_{\text{tx}}^{\text{max}}} }{\mathit{m_i C_{i1}}} + 1 \right) - \frac{\ln{(2)}\mathit{L_i}}{\sqrt{\mathit{m_i}}} \right]  \right] } \right \rfloor \right].
\end{split}
\label{ki_formula}
\end{equation}
\end{lemma}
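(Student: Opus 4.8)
The plan is to collapse the three per-node decision variables down to the single integer $k_i^*$ using Lemma 1, and then to read off $k_i^*$ from the maximum transmit power budget. First I would invoke the two equalities of Lemma 1 (Eq.~(\ref{opt_ki})): at the optimum the sampling period and the packet error probability are both determined by $k_i^*$ through $h_i^* = (\alpha - m_i/B)/k_i^*$ and $\ln p_i^* = \ln(1-\delta)/k_i^*$, i.e. $p_i^* = (1-\delta)^{1/k_i^*}$. Because $0 < 1-\delta < 1$, this map is strictly increasing in $k_i^*$, so choosing $k_i^*$ is equivalent to choosing $p_i^*$ (hence the transmit power), with $m_i$ held fixed.

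Next I would substitute $h_i^*$ and $p_i^*$ into the power objective (\ref{total_power})--(\ref{op_objective}) to write the per-node power purely as a function of $k_i^*$ and $m_i$, and then determine which constraint fixes $k_i^*$. The claim implicit in (\ref{ki_formula}) is that the binding constraint is the maximum transmit power bound (\ref{op_max_wt_cons}); I expect establishing this to be the main obstacle. The difficulty is that $k_i^*$ enters the objective through two opposing channels: raising $k_i^*$ increases $p_i^*$ and therefore lowers the per-transmission transmit power $W_{\text{tx},i}$ (cf. the expression derived from (\ref{coding_rate})), but it simultaneously shortens $h_i^*$ and so inflates the $1/h_i$ factor in (\ref{total_power}). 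The monotonicity argument must show that the optimal operating point pushes the transmit power to its allowed extreme, so that $W_{\text{tx},i} = W_{\text{tx}}^{\max}$ holds with equality and pins down $p_i^*$, and I would also confirm that the sampling period bound (\ref{op_sampling_cons}) only contributes the lower-bound requirement $k_i^* = (\alpha - m_i/B)/h_i^* \ge 1$.

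Once (\ref{op_max_wt_cons}) is known to be tight, the remaining work is a direct inversion. Setting $W_{\text{tx}}^{\max}$ equal to the transmit-power expression $C_{i1}\big[\exp(Q^{-1}(p_i)/\sqrt{m_i} + \ln(2)L_i/m_i) - 1\big]$, taking logarithms and solving for $Q^{-1}(p_i)$ gives $Q^{-1}(p_i) = \sqrt{m_i}\,\ln\!\big(W_{\text{tx}}^{\max}/(m_i C_{i1}) + 1\big) - \ln(2)L_i/\sqrt{m_i}$; applying $Q$ to both sides produces the bracketed term appearing in (\ref{ki_formula}). Substituting this $p_i$ into the Lemma 1 identity $k_i^* = \ln(1-\delta)/\ln p_i^*$ then yields the real-valued ratio $\ln(1-\delta)/\ln Q[\,\cdot\,]$.

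Finally I would impose integrality. By Lemma 1, $k_i^*$ must be a positive integer, and by (\ref{op_sampling_cons}) it is at least $1$; rounding the real-valued ratio to the admissible integer and clamping below at $1$ gives $k_i^* = \max[\,1, \lfloor\,\cdot\,\rfloor\,]$, matching (\ref{ki_formula}). The one point I would verify carefully in the write-up is the rounding direction, since it must be chosen so that the rounded $k_i^*$ keeps both the transmit-power constraint (\ref{op_max_wt_cons}) and the PAoI constraint (\ref{op_paoi_violate_cons}) satisfied; this is exactly the direction dictated by the monotonicity conclusion of the second step.
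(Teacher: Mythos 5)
Your proposal follows essentially the same route as the paper's proof: use Lemma~1 to collapse the per-node problem onto the single integer $\mathit{k_i^*}$, argue that the power objective is monotonically increasing in $\mathit{k_i^*}$ so the smallest feasible value is optimal, obtain that value by inverting the maximum transmit power constraint $\mathit{W_{\text{tx},i}} \leq \mathit{W_{\text{tx}}^{\text{max}}}$ for $\mathit{p_i}$ and hence for $\mathit{k_i}$, and finally clamp at $1$. The two points you single out as requiring care --- the net monotonicity of the objective in $\mathit{k_i^*}$ despite the opposing effects through $\mathit{p_i^*}$ and $\mathit{h_i^*}$, and the rounding direction (the stated floor versus the ceiling one would expect when enforcing a lower bound on $\mathit{k_i}$) --- are exactly the steps the paper's own proof asserts in a single sentence without further justification, so your outline is, if anything, more candid about where the real work lies.
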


\begin{proof}
The power consumption in Eq. (\ref{op_objective}) increases monotonically with $\mathit{k_i^*}$. Therefore, $\mathit{k_i^*}$ must be the smallest positive integer that satisfies the constraints (\ref{op_max_blocklength_cons})-(\ref{op_sched_cons}). In Eq. (\ref{ki_formula}), the second term of the maximum function represents the minimum value of $\mathit{k_i}$ obtained in Lemma 1.  
\end{proof}

By using (\ref{opt_ki}), the transmit power can be rewritten as 
\begin{equation}
  \mathit{W_{\text{tx},i}^*} = \mathit{C_{i1}} \left[\exp\left({ \frac{Q^{-1} ((1-\mathit{\delta})^{1/\mathit{k_i^*}})}{\sqrt{\mathit{m_i}}} + \frac{\ln(2)\mathit{L_i}}{\mathit{m_i}}} \right) - 1\right],  
\end{equation} 
while the schedulability constraint can be reformulated as 
  \begin{equation}
     \sum_{\mathit{i}=1}^{\mathit{N}} \frac{\mathit{m_i  k_i^*}}{\mathit{B\alpha} - \mathit{m_i}} \leq \mathit{\beta}.
\end{equation} 
Also, Eq. (\ref{total_power}) is updated as 
  \begin{equation}
     \begin{split}
        \mathit{W_i^*}(\mathit{m_i}) &= \frac{\mathit{C_{i1} m_i k_i^*}}{\mathit{B\alpha} - \mathit{m_i}}  \left[ \exp\left({ \frac{Q^{-1} {(1-\mathit{\delta})} ^\frac{1}{\mathit{k_i^*}} }{\sqrt{\mathit{m_i}}} + \frac{\ln(2)\mathit{L_i}}{\mathit{m_i}} }\right) - 1\right] \\ 
        &+ \frac{ \mathit{W_i^c m_i k_i^*}}{\mathit{B\alpha} - \mathit{m_i}}.   
\end{split}
\end{equation} 
The joint optimization problem (\ref{optimization_problem}) can then be reformulated as

\begin{subequations} 
\begin{align}
    \underset{{\mathit{m_i}}, i \in [1,\mathit{N}]}{\min} \quad \sum_{\mathit{i}=1}^{\mathit{N}} \mathit{W_i^*(m_i)}\label{eq:modified_goal} 
\end{align}
\quad \text{s.t.} 
\begin{equation}
     \mathit{m_i} \leq \mathit{M_{\text{th}}}, \quad \forall \mathit{i} \in [1,\mathit{N}] \\
\label{eq:modified_m_th}
\end{equation}
\begin{equation}
  \mathit{W_{\text{tx},i}^*} \leq \mathit{W_{\text{tx}}^{\text{max}}}, \quad \forall \mathit{i} \in [1,\mathit{N}]  \\
\label{modified_op_max_wt_cons}
\end{equation}
\begin{equation}
    \sum_{\mathit{i}=1}^{\mathit{N}} \frac{ \mathit{m_i  k_i^*}}{\mathit{B\alpha} - \mathit{m_i}} \leq \mathit{\beta}.
\label{modified_op_sched_cons}
\end{equation}
\label{modified_optimization_problem}
\end{subequations} 

\subsection{Safe DRL Stage}\label{sec:drl-stage}
Now that the problem is simplified, we propose a safe deep reinforcement learning based solution to solve the non-convex optimization problem. After solving the simplified version, the optimal packet error probability and sampling period can be obtained by utilizing the optimality conditions obtained in Section \ref{sec:optimization-theory-stage}. 

We now provide an overview of safe DRL, the teacher-student control mechanism, and its integration with the DRL model.

\begin{figure*}[t]
\centering
\includegraphics[width = .95\textwidth]{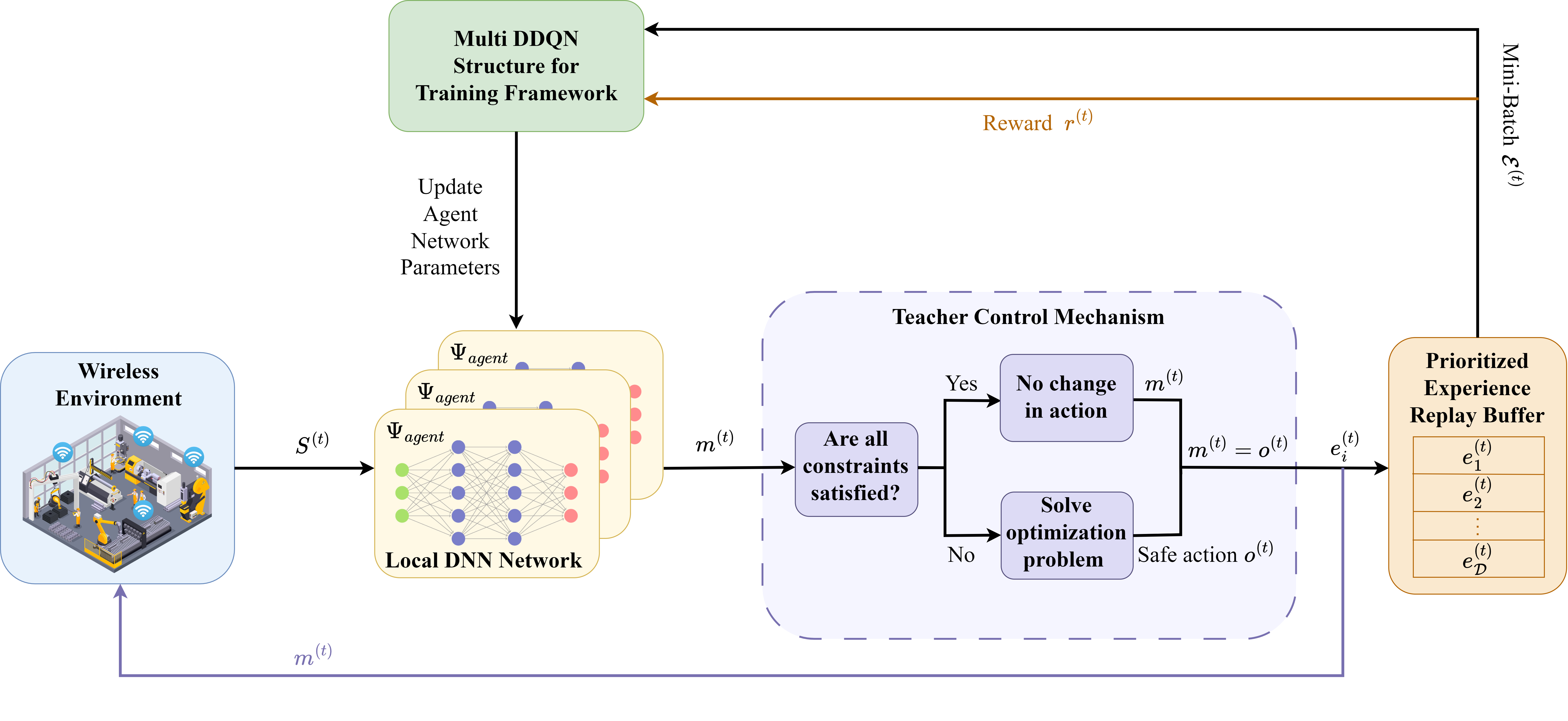}
\caption{Proposed safe DRL algorithm.}
\label{system_diagram}
\end{figure*}

\subsubsection{Overview of Safe DRL} \label{subsec:overview-drl}

RL is a machine learning approach where an agent learns to make decisions by interacting with an environment, aiming to maximize cumulative rewards over time. The agent selects actions based on the current state of the environment, and these actions affect both immediate rewards and future states. RL methods, such as Q-learning and DQN, allow the agent to learn optimal strategies through trial and error, which often involves exploring suboptimal or unsafe actions during the learning phase. In many practical applications, such as autonomous driving or wireless communications, it is crucial to prevent unsafe behaviors, especially during the learning process.

Safe RL addresses this challenge by ensuring that the agent operates within predefined safety constraints. Various approaches are used to enforce these constraints, such as modifying the reward function to penalize constraint violations, extending Markov Decision Processes (MDPs) to include safety constraints, and using teacher-student frameworks where the teacher oversees the agent’s actions, ensuring that unsafe actions are corrected in real-time. In other cases, Lyapunov-based methods are employed to guarantee stability and safety throughout the learning process. Safe RL is critical in scenarios where violating constraints can lead to costly or dangerous outcomes, ensuring that the agent can still explore and learn effectively without compromising safety.

In the context of URLLC systems, safe RL plays a vital role due to the strict requirements on latency, reliability, and power consumption. Traditional RL methods, which lack built-in safety mechanisms, may result in the exploration of unsafe actions, potentially violating these stringent constraints. Safe RL, however, ensures that the agent can optimize key performance metrics, such as minimizing latency or power consumption, while consistently adhering to reliability and latency thresholds. This ability to balance optimization with guaranteed adherence to system constraints makes safe RL indispensable for meeting the real-time, mission-critical demands of URLLC applications, where even minor deviations can have significant impacts on service quality and system performance.

\subsubsection{Teacher-Student Control Mechanism} \label{subsec:proposed-drl}

\paragraph{Teacher Role in Ensuring Safety} \label{subsubsec:teacher-ensure}
The teacher in the teacher-student framework plays a critical role in ensuring that the agent's actions remain within predefined safety constraints. The teacher continuously monitors the student's decisions and provides real-time guidance to prevent constraint violations. The following are key stages where constraint violations may occur, along with the mechanisms used by the teacher to mitigate risks and why these interventions are essential:

\begin{itemize}
\item \textit{Random Experience Accumulation Phase:}
In the early phase of random experience accumulation, the agent takes exploratory actions without any prior knowledge, significantly increasing the risk of constraint violations. Since these actions are chosen randomly, they may breach critical limits on latency or packet error rates. To prevent this, the teacher continuously monitors the agent’s actions and intervenes when necessary by adjusting them to the closest feasible option that adheres to all predefined safety constraints. This ensures that the experience stored in the replay buffer is not only valuable for effective learning and future decision-making but also safe.

\item \textit{Exploration in Training Phase:}
During training, the agent employs an epsilon-greedy policy to balance exploration and exploitation. While this strategy facilitates effective learning, the exploration phase (triggered with probability epsilon) can still result in the selection of unsafe actions that violate key constraints. To mitigate this, the teacher carefully evaluates each action proposed under the epsilon-greedy policy and intervenes when needed, adjusting any unsafe choices to ensure they remain within the predefined safety boundaries. This guidance helps the agent maintain compliance with constraints, promoting safer exploration throughout the learning process.

\item \textit{Exploitation in Training Phase:}
The agent's decision-making, guided by learned Q-values, can occasionally recommend actions that fail to meet safety requirements, often due to incomplete learning or suboptimal policy updates. To prevent this, the teacher continuously evaluates the agent's suggested actions, intervening whenever a potential violation occurs. This ensures that the learning process does not reinforce unsafe behavior, maintaining a stable and reliable learning environment.

\item \textit{Testing Phase:}
During the testing phase, the agent executes actions without further training or exploration, relying on the learned policy to perform optimally. However, if the policy has learned suboptimal strategies that occasionally violate constraints, the teacher's role is to monitor and correct these actions in real-time. By doing so, the teacher guarantees that the agent's performance remains within safety boundaries, even in the absence of explicit penalties for violations in the reward function.
\end{itemize}

By intervening at these critical points, the teacher ensures that the agent's actions consistently adhere to the required safety constraints, preventing any violations that could compromise system performance or safety. This constant guidance not only helps maintain a safe learning environment but also accelerates the convergence of the agent's policy towards optimal, safe solutions.

\paragraph{State, Action, and Reward Design} \label{subsubsec:key-elements}
The key components of the teacher controlled DRL are outlined below.

\begin{itemize}
\item \textit{States:}
The state space for each agent in the system represents factors that directly influence the environment, with each agent’s next state being affected by the actions of all agents collectively. The state for agent $\mathit{i}$ at time $\mathit{t}$ is defined as 
\begin{equation} \label{eq:8}
    \mathit{s_i^{(t)}} = \left(\mathbf{\mathit{m}}^{(t-1)}, \mathit{R_i^{(t-1)}}, \mathbf{\mathit{W}}^{(t-1)}, \mathcal{\mathit{P}}^{(t-1)}, \mathit{\gamma_i^{(t)}}, \mathbf{\mathit{g}}^{(t)}\right).
\end{equation}
In this state space, $\mathbf{\mathit{m}}^{(t-1)}$ is a vector containing the blocklengths selected by each node at the previous time step, with the $\mathit{i}$th element specifically representing the blocklength chosen by node $\mathit{i}$. The variable $\textstyle \mathit{R_i^{(t-1)}}$ denotes the data rate of node $\mathit{i}$ from the prior time step. The vector $\mathbf{\mathit{W}}^{(t-1)}$ includes the power levels used by each node at the previous time step, where the $\mathit{i}$th element indicates the transmit power for node $\mathit{i}$. Additionally, $\textstyle \mathcal{\mathit{P}}^{(t-1)}$ captures the total power consumption of all nodes during the previous time step. The SNR for node $\mathit{i}$ at the current time step, $\textstyle \mathit{\gamma_i^{(t)}} = \frac{\mathit{g_{i}^{(t)} W_{\text{tx}, i}^{(t-1)}}}{\sigma^2}$, is calculated using the transmit power from the previous step and the current channel gain $\textstyle \mathit{g_i^{(t)}}$. Finally, the state space includes $\textstyle \mathbf{\mathit{g}}^{(t)}$, which is a vector of the current channel gains for all nodes. This comprehensive state representation combines historical performance data with current channel conditions, enabling agents to make well-informed decisions. An MDP is created using the parameters from the previous time step, further constraining the model for the blocklength. By incorporating both SNR and channel gains, the framework dynamically adapts to real-time variations in the wireless environment.

\item \textit{Actions:}
The agent $\mathit{i}$ selects the blocklength $\mathit{m_i^{(t)}}$ for transmission. The action spaces of all agents are identical, and that is given by
\begin{equation} \label{7_2}
    \mathit{A} = \{1, 2, ..., \mathit{M_{\text{th}}}\},
\end{equation}
which is decided based on the maximum blocklength constraint. 

\item \textit{Reward:}
The reward function is a crucial element in the RL process. Each agent aims to take optimal action to maximize the total reward. The reward corresponds to the objective function of the optimization problem (\ref{modified_optimization_problem}) and is given as follows: 
\begin{equation} \label{eq:9}
\begin{split}
    \mathit{r^{(t)}} = -\sum_{\mathit{i}=1}^{\mathit{N}}\mathit{W_i(m_i^{(t)})}.
\end{split}
\end{equation} 
In our DRL model, the reward function does not include penalties for constraint violations because we employ a safe method that inherently prevents any violations. As a result, the agent only explores safe actions, allowing the reward function to focus solely on optimizing the primary objective.

\end{itemize} 

\paragraph{Action Advising Strategy} \label{subsubsec:action-advice}

In its role as the control mechanism, the teacher identifies the closest action to the student's proposed action while satisfying all of the constraints similar to \cite{safe_robot_2022}. To achieve this, the teacher solves the following convex optimization problem:

\begin{subequations} 
\begin{equation}
    \underset{\mathit{o^{(t)}}}{\min} \quad \lVert \mathit{o^{(t)}} - \mathit{m^{(t)}} \rVert 
\label{eq:teacher_goal}
\end{equation}
\quad \text{s.t.} 
\begin{equation}
     \mathit{o_i^{(t)}} \leq \mathit{M_{\text{th}}}, \quad \forall \mathit{i} \in [1,\mathit{N}] \\
\label{eq:teacher_m_th}
\end{equation}
\begin{equation}
\resizebox{0.95\hsize}{!}{$%
    \mathit{C_{i1}} \left[\exp\left( \frac{Q^{-1} ((1-\mathit{\delta})^{1/\mathit{k_i^{(t)}}})}{\sqrt{\mathit{o_i^{(t)}}}} + \frac{\ln(2)\mathit{L_i}}{\mathit{o_i^{(t)}}}\right) - 1\right] \leq \mathit{W_{\text{tx}}^{\text{max}}}, 
    $%
    }%
\label{teacher_op_max_wt_cons}
\end{equation}
\begin{equation}
    \sum_{\mathit{i}=1}^{\mathit{N}} \frac{ \mathit{o_i^{(t)}  k_i^{(t)}}}{\mathit{B\alpha} - \mathit{o_i^{(t)}}} \leq \mathit{\beta},
\label{teacher_op_sched_cons}
\end{equation}
\label{teacher_optimization_problem}
\end{subequations} 
where $\mathit{o^{(t)}}$ is the current action advice vector from the teacher, with the $\mathit{i}$th element $\mathit{o_i^{(t)}}$ corresponding to the advised action for sensor node $\mathit{i}$; $\mathit{m^{(t)}}$ is the current action vector derived from the student's DRL algorithm, where the $\mathit{i}$th element $\mathit{m_i^{(t)}}$ reflects the student's action for sensor node $\mathit{i}$. The objective of the optimization problem (\ref{eq:teacher_goal}) is to minimize the Euclidean distance between the teacher's action advice vector and the student's action vector. The constraints (\ref{eq:teacher_m_th}), (\ref{teacher_op_max_wt_cons}) and (\ref{teacher_op_sched_cons}) correspond to the constraints (\ref{eq:modified_m_th}), (\ref{modified_op_max_wt_cons}) and (\ref{modified_op_sched_cons}) in the original problem, respectively. 

This optimization problem is solved using convex optimization solvers. The resulting action advice, $\mathit{o^{(t)}}$, is implemented by all nodes instead of the student's proposed action, thereby ensuring that the system operates safely without violating any requirements. While the teacher restricts unsafe exploration, it does not limit the discovery of better policies within the feasible space, allowing the DRL agent to optimize performance while maintaining constraint compliance.

\subsubsection{Integration with DRL Model} \label{subsec:drl-fw}

In the proposed safe DRL algorithm shown in Fig. \ref{system_diagram}, each sensor node functions as an independent agent within a multi-agent DRL architecture, where the teacher advice method is employed to ensure that system constraints are consistently satisfied. The evolution of the environment's states in this framework is determined by the collective actions of all agents. While multi-agent reinforcement learning models have demonstrated strong empirical performance, they often lack theoretical guarantees of convergence, as noted in \cite{nguyen2020deep}. To enhance both implementation efficiency and stability, a centralized training and execution approach is adopted, wherein all agents operate synchronously and select their actions simultaneously \cite{gupta2017cooperative}. 

In this setup, each agent $\mathit{i}$ observes the state $\mathit{s_i^{(t)}}$ and chooses an action $\mathit{m_i^{(t)}}$ via its respective local neural network. If the action of the agent, identified as the student, satisfies all the constraints, the teacher does not interfere. However, if any constraints are violated, the teacher intervenes by suggesting the closest permissible action $\mathit{o_i^{(t)}}$ that satisfies all the constraints. Then, the student action $\mathit{m_i^{(t)}}$ is set to $\mathit{o_i^{(t)}}$ to ensure compliance with the requirements. Upon executing the action, each agent receives a reward $\mathit{r_i^{(t)}}$, and the cumulative reward function $\mathit{r^{(t)}}$ across all agents is computed. At each time step $\mathit{t}$, an experience tuple is recorded as \( \mathit{e^{(t)}} = (\mathit{s^{(t)}, m^{(t)}, r^{(t)}, s^{(t+1)}}) \), where $\mathit{s^{(t)}}$ is the state vector with the $\mathit{i}$th element $\mathit{s_i^{(t)}}$ corresponding to the state of agent $\mathit{i}$. Subsequently, the experiences of all agents are consolidated into a Prioritized Experience Replay (PER) buffer $\mathcal{\mathit{D}}$, as not all transitions contribute equally to learning. The buffer assigns priority to each transition based on its impact on the learning process, which is measured by the change in cumulative reward. Transitions with larger changes in reward are deemed more significant and are given higher priorities. The probability of sampling a transition is proportional to its priority, meaning transitions with higher priority are more likely to be selected for training. The degree of prioritization is controlled by a parameter that adjusts how strongly the learning process favors high-priority transitions. This approach ensures that transitions that lead to significant improvements in reward are replayed more often, thereby accelerating the learning process and enhancing overall performance \cite{enabling_robust_2022}. Training is then conducted using the mini-batch $\mathcal{\mathcal{E}}^{(t)}$ sampled from this buffer to update the network parameters.

The Q-learning algorithm seeks to discover a policy that maximizes the cumulative future reward. The Q-function represents the expected future reward for taking a specific action under policy $\mathit{\pi}$, expressed as
\begin{equation}
\mathit{Q^{\pi}(s, m)} = \mathbb{E}_{\mathit{\pi}} [\mathit{R^{(t)} | s^{(t)} = s, m^{(t)} = m}],
\end{equation}
where $\mathit{R^{(t)}}$ denotes the discounted future reward, calculated as
\begin{equation}
\mathit{R^{(t)}} = \sum_{\tau=0}^{\infty} \mathit{\alpha_D^{\tau} r^{(t+\tau)}},
\end{equation}
with $\mathit{\alpha_D} \in (0, 1]$ being the discount factor. In conventional Q-learning, a lookup table of Q-values is created, and the agent selects actions according to the $\mathit{\epsilon}$-greedy policy. Initially, Q-values are randomly assigned, and the agent chooses actions based on this policy at each time step $\mathit{t}$. The $\mathit{\epsilon}$-greedy strategy allows the agent to either exploit the best-known action with probability $1 - \mathit{\epsilon}$, or explore by selecting a random action with probability $\mathit{\epsilon}$. To encourage more exploitation over time, the value of $\mathit{\epsilon}$ is progressively decreased using a decaying $\mathit{\epsilon}$-greedy algorithm where $\textstyle \mathit{\epsilon^{(t+1)}} = (1 - \mathit{\alpha_\epsilon})^{(t)} \mathit{\epsilon^{(0)}}$, where $\mathit{\alpha_\epsilon}$ is the decay factor. 

A DQN improves upon traditional Q-learning by using deep neural networks to approximate the Q-value function, enabling it to handle high-dimensional state spaces. It incorporates a replay buffer, which stores past experiences for random sampling during training to reduce data correlation, and a fixed target network, which keeps target Q-values stable by periodically updating its parameters, ensuring more reliable learning. However, DQN often overestimates Q-values, leading to instability during training. To address this, Double Deep Q-learning (DDQN) was introduced, which decouples action selection and evaluation by using an online network for selecting actions and a target network for evaluation. Furthermore, the Dueling Q-network architecture enhances learning by separately estimating the state value and the advantage function, enabling the network to identify valuable states irrespective of the chosen action. Combining these approaches, the Dueling Double Deep Q-Network (D3QN) improves both stability and performance by addressing overestimation issues and enabling more efficient learning \cite{hessel2018rainbow}. 

\begin{algorithm}[H]
\caption{Safe D3QN-Based Algorithm}
\label{alg:proposed_alg}
\begin{algorithmic}[1]

    \Statex \textbf{Initialization:}
    \State Initialize replay buffer $\mathcal{D}$.
    \For{each node $i \in \mathcal{N}$}
        \State Initialize local, train, and target networks:  
        \Statex \hspace{1em} $\mathcal{Q}(s_i^{(t)}, m_i^{(t)}; \mathit{\Psi_{i, \text{agent}}^{(t)}})$,  
        \Statex \hspace{1em} $\mathcal{Q}(s_i^{(t)}, m_i^{(t)}; \theta_{i, \text{train}}^{(t)})$,  
        \Statex \hspace{1em} $\mathcal{Q}(s_i^{(t)}, m_i^{(t)}; \theta_{i, \text{target}}^{(t)})$.  

    \EndFor

    \Statex \textbf{Experience Collection and Replay:}
    \For{each time frame $t$}
        \For{each node $i \in \mathcal{N}$}
            \State Observe state $s_i^{(t)}$, transmit to server.
            \State Server selects $m_i^{(t)}$; teacher modifies if needed.
            \State Observe next state $s_i^{(t+1)}$, and store $e_i^{(t)}$ in $\mathcal{D}$.
        \EndFor
        \If{$|\mathcal{D}| \geq \mathcal{E}$}
            \State Sample batch $\mathcal{E}^{(t)}$ using prioritized replay.
            \State Compute D3QN loss (\ref{eq:DQN_loss}), update $\theta_{i, \text{train}}^{(t)}$ (\ref{gradient_update}).
            \State Soft update $\theta_{i, \text{target}}^{(t)}$ via (\ref{soft_update}).
            \State Broadcast $\theta_{i, \text{train}}^{(t)}$ to update $\mathit{\Psi_{i, \text{agent}}^{(t)}}$.
        \EndIf
    \EndFor

    \Statex \textbf{Training and Execution:}
    \For{each $t$}
        \For{each $i \in \mathcal{N}$}
            \State Observe state $s_i^{(t)}$, transmit to server.
            \State Select action $m_i^{(t)}$ via $\epsilon$-greedy.
            \State Teacher verifies $m_i^{(t)}$, modifies if needed.
            \State Observe next state $s_i^{(t+1)}$, and store $e_i^{(t)}$ in $\mathcal{D}$.

        \EndFor
        \State Sample $\mathcal{E}^{(t)}$, compute loss (\ref{eq:DQN_loss}).
        \State Update $\theta_{i, \text{train}}^{(t)}$ (\ref{gradient_update}).
        \State Soft update $\theta_{i, \text{target}}^{(t)}$ (\ref{soft_update}).
        \State Broadcast $\theta_{i, \text{train}}^{(t)}$ to update $\mathit{\Psi_{i, \text{agent}}^{(t)}}$.
        \State Execute the verified actions.
    \EndFor

\end{algorithmic}
\label{alg1}
\end{algorithm}

In this work, a D3QN is employed, comprising two DQNs: the target network and the training network, with parameters \(\textstyle \mathit{\theta^{(t)}_{\text{target}}} \) and \( \textstyle \mathit{\theta^{(t)}_{\text{train}}} \), respectively. The target network parameters $\textstyle \mathit{\theta^{(t)}_{\text{target}}}$ are updated using a soft update method every time frame, such that 
\begin{equation} \label{soft_update}
    \mathit{\theta^{(t)}_\text{target}} = \mathit{\alpha_S \theta^{(t)}_\text{train} + (1 - \alpha_S)\theta^{(t)}_\text{target}},
\end{equation}
where $\mathit{\alpha_S} \ll 1$ is a smoothing factor ensuring greater stability in the learning process. The least squares loss is calculated from a mini-batch of sampled transitions $\mathcal{\mathit{E}}^{(t)}$ using 
\begin{equation} \label{eq:DQN_loss}
    \mathcal{L}(\mathit{\theta^{(t)}_{\text{train}}}) = \sum_{(\mathit{s, m, r, s'}) \in \mathcal{E}^{(t)}} (\mathit{y^{(t)}_{\textsc{DQN}}(r, s')} - \mathit{Q(s, m; \theta^{(t)}_{\text{train}})})^2, 
\end{equation}
where 
\begin{equation} \label{eq:target_y}
    \mathit{y^{(t)}_{\textsc{DQN}}(r, s')} = \mathit{r + \alpha_T \max_{m'} Q(s',m'; \theta^{(t)}_{\text{target}})},
\end{equation}
with $\mathit{\alpha_T} \in (0, 1)$ as the discount factor.

The agent minimizes the loss function by adjusting the training network parameters $\textstyle \mathit{\theta_{\text{train}}^{(t)}}$ using stochastic gradient descent, selected for its rapid convergence time \cite{lecun2015deep}. The gradient update is performed as 
\begin{equation} \label{gradient_update}
    \mathit{\theta_\text{train}^{(t)}} \leftarrow \mathit{\theta_\text{train}^{(t)} - \lambda^{(t)} \nabla_{\theta_\text{train}^{(t)}} \mathcal{L}(\theta_\text{train}^{(t)})},
\end{equation} 
where $\mathit{\lambda^{(t)}} \in (0, 1)$ represents the learning rate. To ensure a more stable learning process, the learning rate is progressively decreased using \( \textstyle \mathit{\lambda^{(t+1)}} = (1 - \mathit{\alpha_L})\mathit{\lambda^{(t)}}\), where $\mathit{\alpha_L}$ is the learning rate decay factor. 

At the end of each time frame, the updated D3QN parameters $\mathit{\theta_{i, \text{train}}^{(t)}}$ are synchronized with their corresponding local D3QN models, updating the local weights $\mathit{\Psi_{i, \text{agent}}^{(t)}}$.

The overall algorithm for the proposed safe multi-agent D3QN-based resource allocation framework is summarized in Algorithm~\ref{alg1}, detailing the initialization, experience collection, and training-execution phases. In the initialization phase, the experience replay buffer \( \mathcal{D} \) is created (Line 1). Each node \( i \in \mathcal{N} \) initializes its local, train, and target D3QN networks with randomly assigned weights (Lines 2--4). During the experience collection phase, each node observes its local state and sends it to the central server (Line 7). The server selects a random blocklength, which is verified or modified by the teacher (Line 8). The resulting next state is observed, and the experience is stored in \( \mathcal{D} \) (Line 9). Once sufficient data is available, a batch is sampled to compute the loss and update the train and target networks (Lines 11--16). In the training-execution phase, each node again observes its state and communicates it to the server (Line 20). An action is selected using the \( \epsilon \)-greedy policy and then validated or adjusted by the teacher (Lines 21--22). The resulting experience is stored (Line 23). The networks are updated as in the training phase. Finally, the verified actions are executed (Lines 25--29).

\section{Performance Evaluation} \label{results}

In this section, we analyze the performance of the proposed optimization theory-based safe DRL algorithm in comparison to rule-based DRL and optimization theory-based DRL benchmarks. The rule-based DRL benchmark serves as a safe RL baseline, where system constraints are enforced through predefined rules. While the reference approach in \cite{nguyen2022fuzzy} directly encodes expert-defined rules for action selection, we adapt this concept by generating actions randomly and repeating until the selected action satisfies all constraints. The optimization theory-based DRL benchmarks employ D3QN and DDQN models without any integrated safe mechanisms.

\subsection{Simulation Setup} \label{sim_setup}
Simulations are run for a network comprising uniformly distributed nodes that are in communication with a central controller inside a 50 m radius circle. Both large-scale and small-scale fading affect the connections between the sensor nodes and the gateway. The large-scale fading $\mathit{\zeta_{i}}$ happens due to path loss and shadowing effects caused by the obstacles between the transmitter and the receiver. It is modeled as follows: $\textstyle \mathit{PL(d)} = \mathit{PL(d_0)} + 10 \mathit{\zeta}\log(\frac{\mathit{d_i}}{\mathit{d_0}}) + \mathit{Z}$ dB, where $\mathit{d_i}$ is the node's distance from the central controller, $\textstyle \mathit{PL(d_i)}$ is its path loss at $\mathit{d_i}$ from the controller, measured in decibels, $\mathit{PL(d_0)}$ = 35.3 dB is its path loss at reference distance $\mathit{d_0} = 1 $ m, and $\mathit{\zeta}=3.76$ is the path loss exponent \cite{access2010further}. $\mathit{Z}$ is the log-normal shadowing corresponding to a Gaussian random variable with zero mean and standard deviation equal to 4 dB. Jake's model, which is represented as a first-order complex Gauss-Markov process, is used for the small-scale fading:
 \begin{equation}
     \mathit{f_{i}^{(t)}} = \mathit{\rho f_{i}^{(t-1)}} + \sqrt{1 - \mathit{\rho^2}}\mathit{e_{i}^{(t)}},
 \end{equation}
where $\mathit{\rho}$ is the correlation coefficient, which is set to 0.6. Also, $\mathit{f_{i}^{(t)}}$ and $\mathit{e_{i}^{(t)}}$ are the channel coefficient and the channel innovation process of node $\mathit{i}$ at time $\mathit{t}$, respectively. $\mathit{f_{i}^{(0)}}$ and $\mathit{e_{i}^{(1)},e_{i}^{(2)}, ...}$ are independent and identically distributed circularly symmetric complex Gaussian (CSCG) random variables with unit variance. Accordingly, the channel gains are computed by
 \begin{equation}
     \mathit{g_{i}^{(t)}} = |\mathit{f_{i}^{(t)}}|^2 \mathit{\zeta_{i}}, \quad t = 1, 2, ....
 \end{equation}

The noise power spectral $\mathit{\sigma^2}$ is set to -174 dBm/Hz. The other parameters used in the simulations are given in Table \ref{table:parameters}. The simulation parameters are selected in accordance with typical URLLC settings and prior works on short-packet communication and finite blocklength analysis~\cite{aoi_survey_2021, 2022_popovski, hamida_amir_2024, sadi2014minimum}. Although the packet length $L_i$ is fixed for all nodes, the resulting blocklengths $m_i$ vary by node. This variation arises from differences in wireless channel gains, sampling intervals, and packet error probabilities, all jointly considered in the optimization framework.

\begin{table}[h!]
\centering
\caption{Simulation Parameters} 
\label{table:parameters} 
\begin{adjustbox}{width=0.45\textwidth}
\scriptsize 
\begin{tabular}{|c|c|c|c|}
    \hline
    \textbf{Parameter} & \textbf{Value} & \textbf{Parameter} & \textbf{Value} \\
    \hline
    $\mathit{B}$   & 100 kHz    & $\mathit{M_{\text{th}}}$ & 200 Symbols \\
    \hline
    $\mathit{L_i}$ & 100 bits   & $\mathit{\delta}$  & 0.99 \\
    \hline
    $\mathit{\alpha}$   & 101 ms & $\mathit{N}$ & 50 \\ 
    \hline
    $\mathit{W_{\text{max}}}$ & 250 mW & $\mathit{W_c}$  & 5 mW \\ 
    \hline
    $\mathit{\beta}$ & 0.9 & $\mathit{\sigma^2}$  & -174 dBm/Hz \\ 
    \hline
\end{tabular}
\end{adjustbox}
\end{table}

\begin{table}[h!]
\centering
\caption{Hyperparameters}
\label{table:hyperparameters} 
\begin{adjustbox}{width=0.45\textwidth}
\scriptsize 
\begin{tabular}{|c|c|c|c|c|c|}
    \hline
    \textbf{Layers} & \textbf{Input} & \( \mathit{N_1} \) & \( \mathit{N_2} \) & \( \mathit{N_3} \) & \textbf{Output} \\ 
    \hline
    \textbf{Teacher-Student} & &  &  &  &  \\ 
    \textbf{Rule-Based} & $3\mathit{N}+3$ & 32 & 64 & 300 & $\mathit{M_{\text{th}}}, 1$ \\ 
    \textbf{D3QN}  & & & & &  \\ 
    \hline
    \textbf{DDQN} & $3\mathit{N}+3$ & 32 & 64 & 300 & $\mathit{M_{\text{th}}}$ \\ 
    \hline
    \textbf{Activation Function} & Linear & ReLU & ReLU & ReLU & Linear \\ 
    \hline
    \textbf{Mini-Batch Size} & \multicolumn{5}{c|}{64} \\ 
    \hline
    \textbf{Testing} & \multicolumn{5}{c|}{10 simulations \(\times\) 2,500 episodes} \\ 
    \hline
    \textbf{Discount Factor \( \mathit{\alpha_T} \)} & \multicolumn{5}{c|}{0.666} \\ 
    \hline
    \textbf{Soft Update Rate \( \mathit{\alpha_S} \)} & \multicolumn{5}{c|}{$10^{-3}$} \\ 
    \hline
    \textbf{Initial Learning Rate \( \mathit{\lambda^{(0)}} \)} & \multicolumn{5}{c|}{0.03} \\ 
    \hline
    \textbf{Learning Decay Rate \( \mathit{\alpha_L} \)} & \multicolumn{5}{c|}{$10^{-3}$} \\ 
    \hline
    \textbf{Initial Exploration Rate \( \mathit{\epsilon^{(0)}} \)} & \multicolumn{5}{c|}{1} \\ 
    \hline
    \textbf{\(\mathit{\epsilon}\)-Decay Rate \( \mathit{\alpha_\epsilon} \)} & \multicolumn{5}{c|}{$10^{-4}$} \\ 
    \hline
\end{tabular}
\end{adjustbox}
\end{table}

The network architecture consists of one input layer, three hidden layers ($\mathit{N_1}$, $\mathit{N_2}$, $\mathit{N_3}$), and one output layer. The vanishing gradient issue is avoided by using the leaky ReLU activation function in the hidden layers. For Teacher-Student, Rule-Based, and D3QN algorithms, the final layer outputs the advantage and state value functions, featuring $\mathit{N_A}$ neurons for the advantage function, corresponding to $\mathit{M_{\text{th}}}$, and $\mathit{N_V}$ neurons for the state value function, which is set to 1. For the DDQN benchmark, the output is equal to the number of possible actions, which is $\mathit{M_{\text{th}}}$. The mini-batch size is set to 64 for all algorithms. The optimal hyperparameters are determined by employing the grid search algorithm, which systematically examines a grid of all potential combinations of hyperparameter values. The selected hyperparameters are tabulated in Table \ref{table:hyperparameters}. The proposed algorithm is implemented using PyTorch \cite{paszke2019pytorch}. Each test result is averaged over 10 simulations, with each simulation initialized using a unique random seed and spanning 2,500 episodes. 

\begin{figure*}[t] 
    \centering
  \subfloat[\label{1a}]{%
       \includegraphics[width=0.43\linewidth]{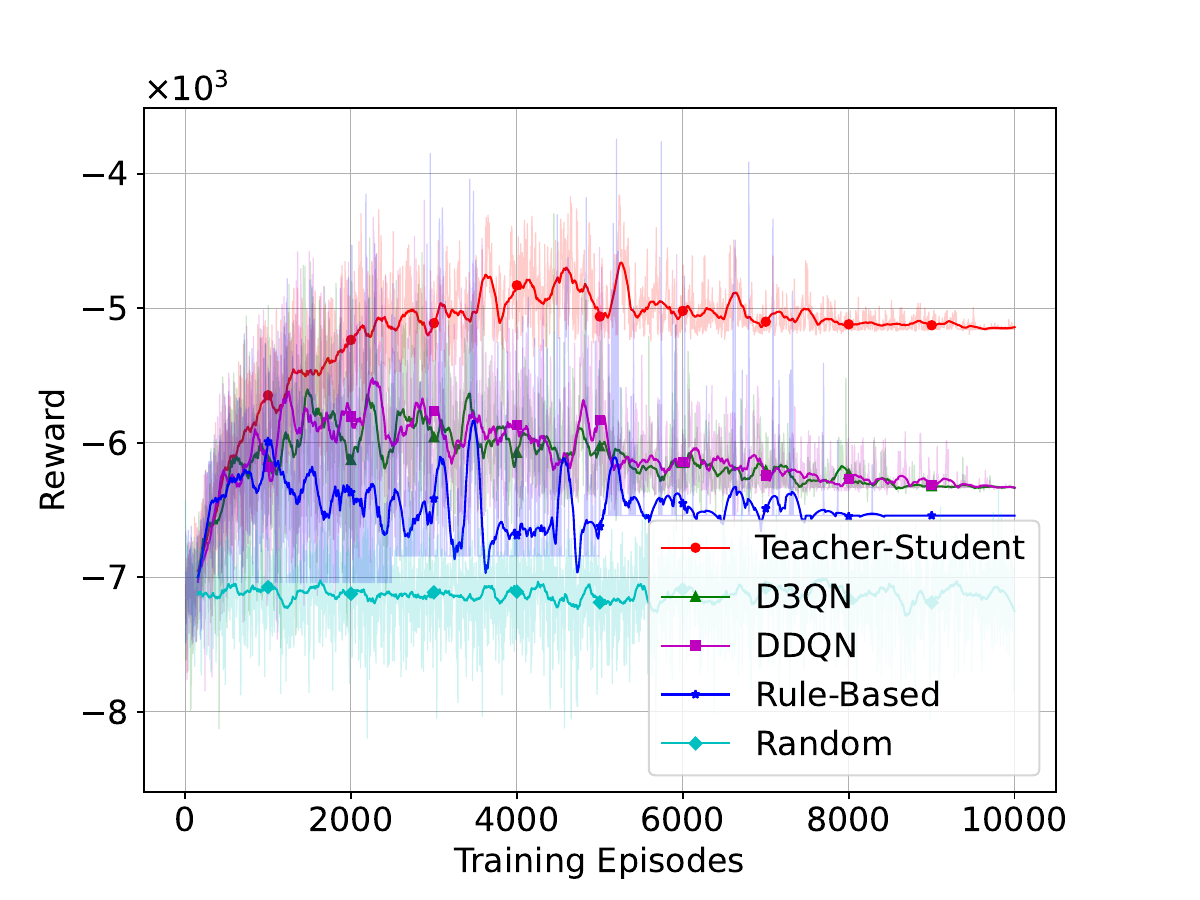}}
    \hspace{-0.5em} 
  \subfloat[\label{1b}]{%
        \includegraphics[width=0.43\linewidth]{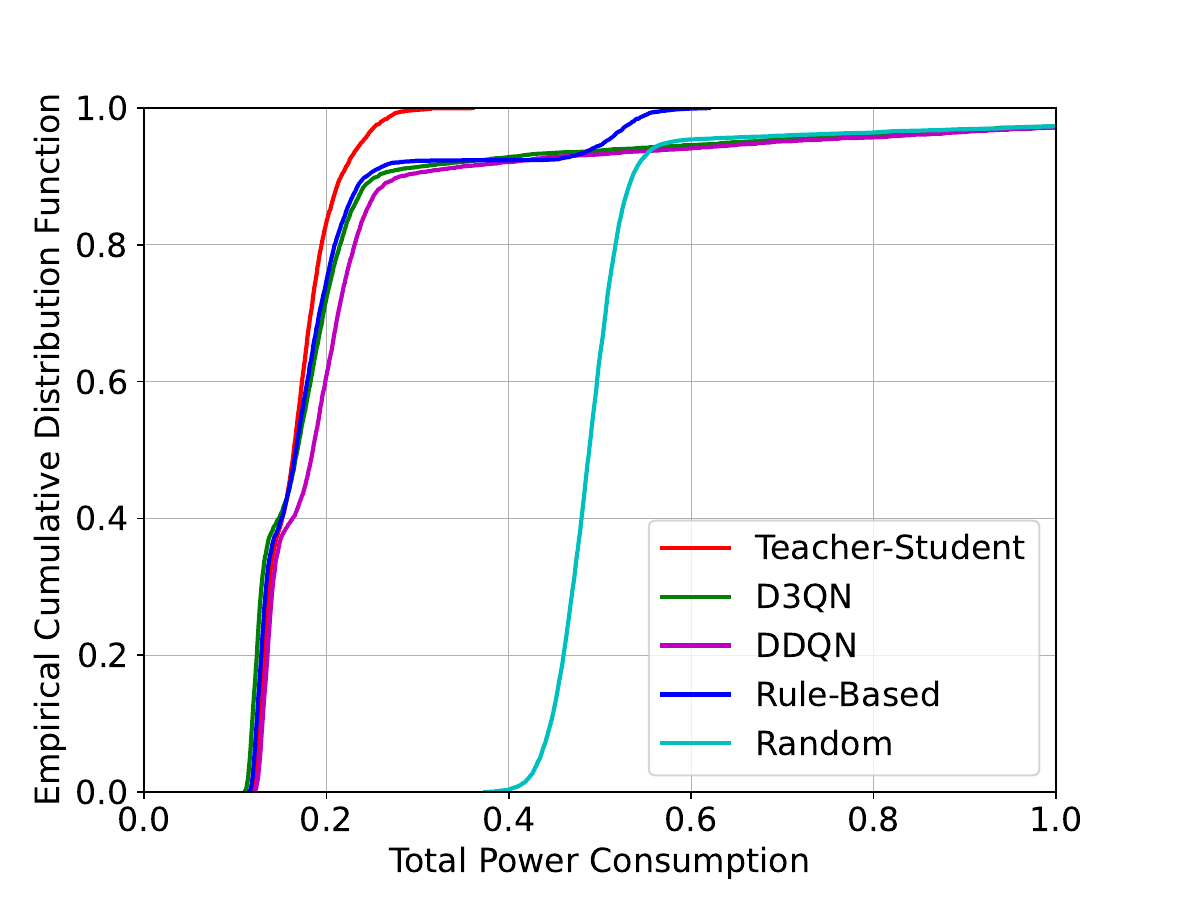}}
    \vspace{-1.3em} 
    \\
  \subfloat[\label{1c}]{%
        \includegraphics[width=0.43\linewidth]{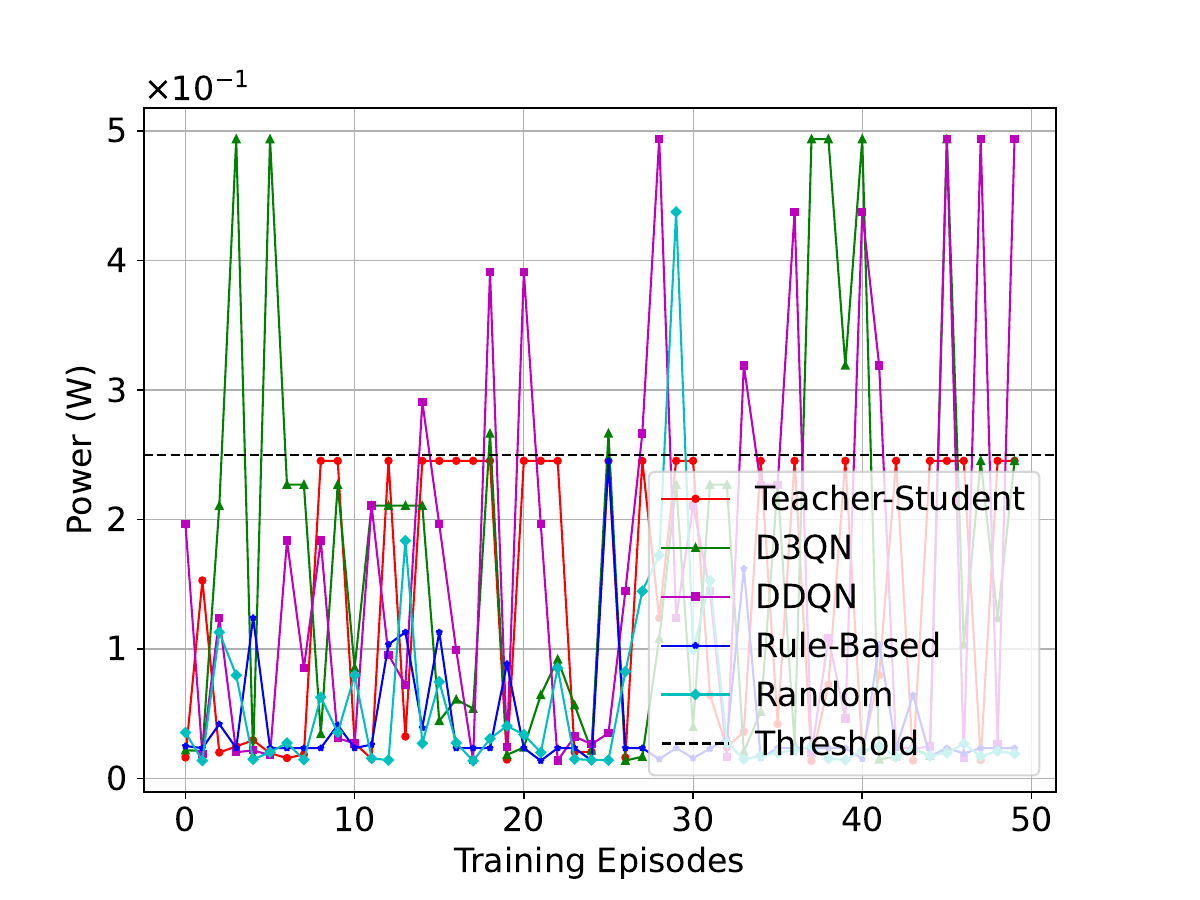}}
    \hspace{-0.5em} 
  \subfloat[\label{1d}]{%
        \includegraphics[width=0.43\linewidth]{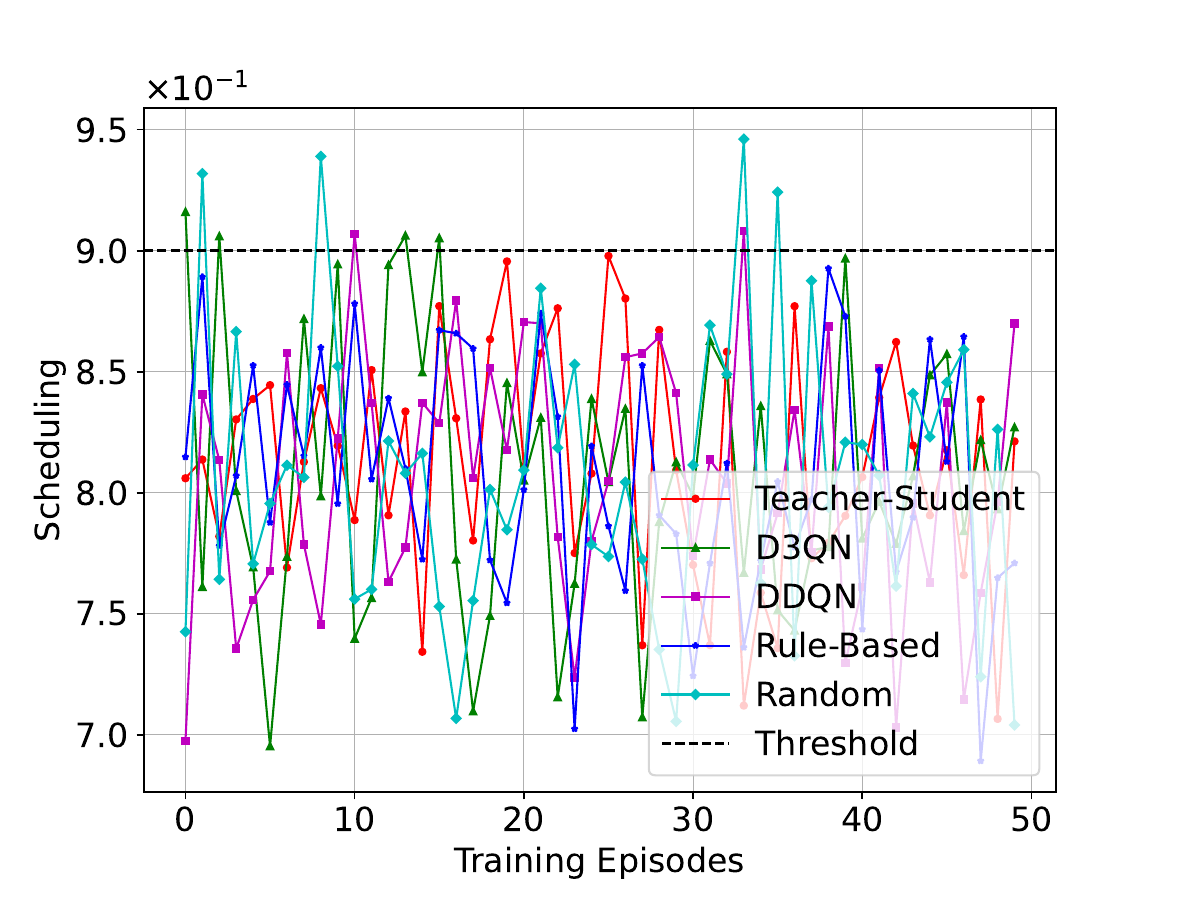}}
  \caption{(a) Training, (b) testing, (c) power violation, and (d) scheduling violation results for different algorithms in a network of 50 nodes with $\mathit{\alpha} = 101 \text{ ms}$.}
  \label{50_nodes_099} 
\end{figure*}

\begin{figure*}[t] 
    \centering
  \subfloat[\label{3a}]{\includegraphics[width=0.43\linewidth]{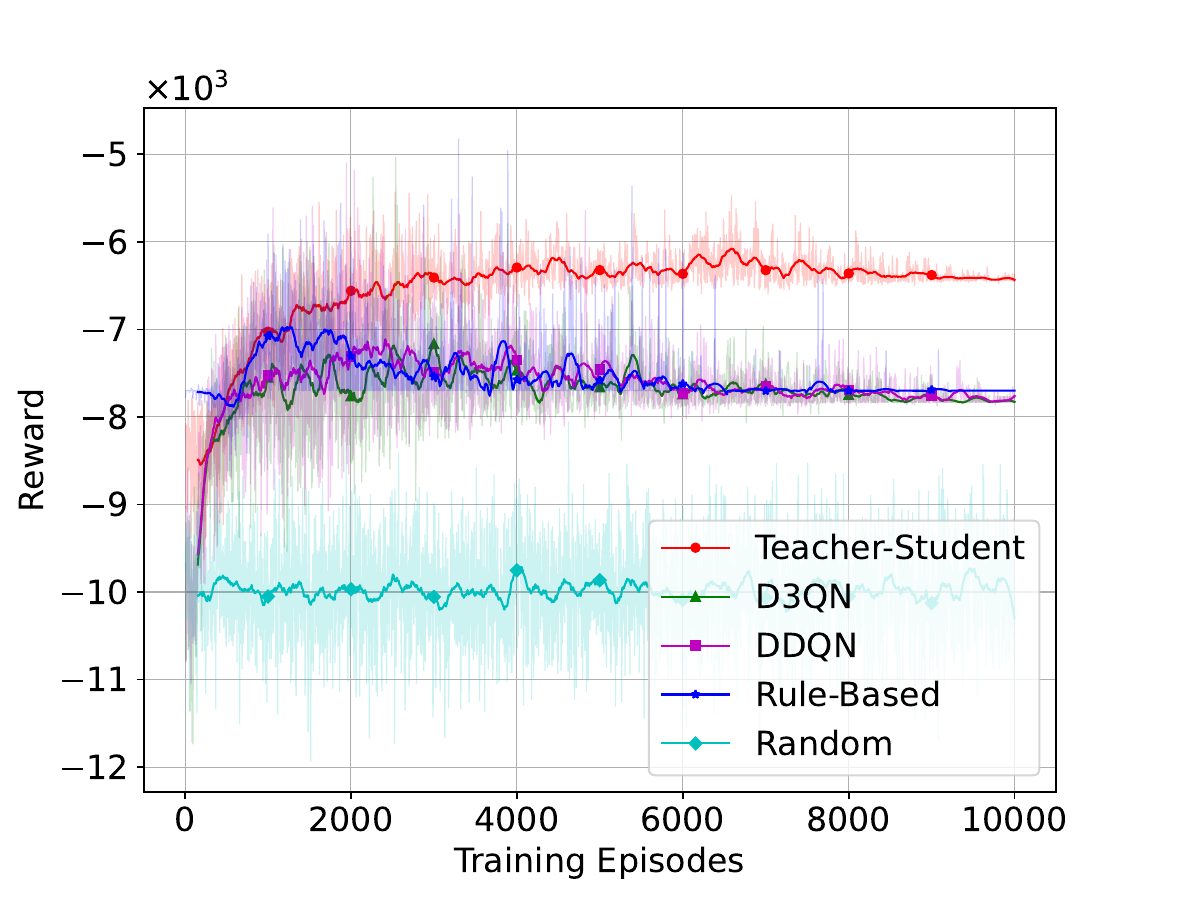}}
  \hspace{-0.5em} 
  \subfloat[\label{3b}]{\includegraphics[width=0.43\linewidth]{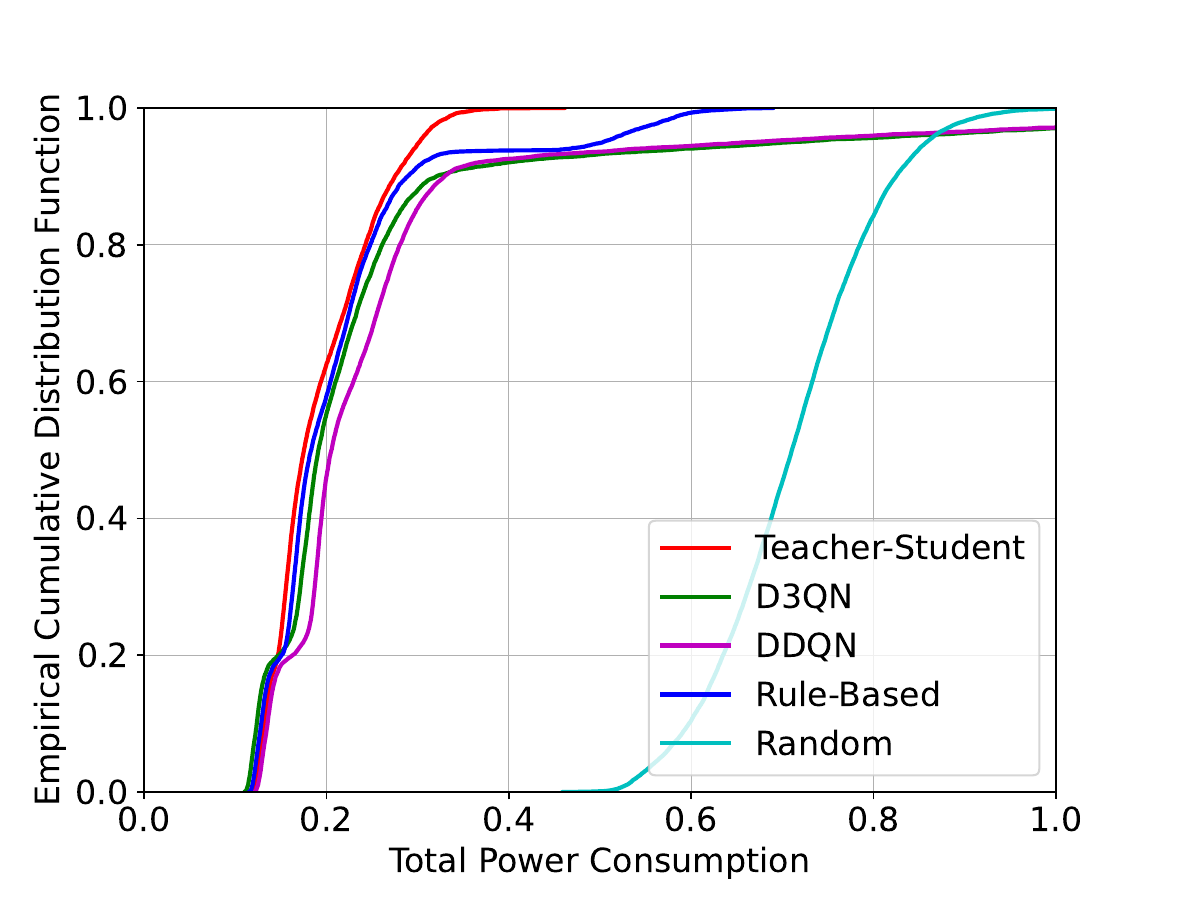}}
    \vspace{-1.3em} 
    \\
  \subfloat[\label{3c}]{\includegraphics[width=0.43\linewidth]{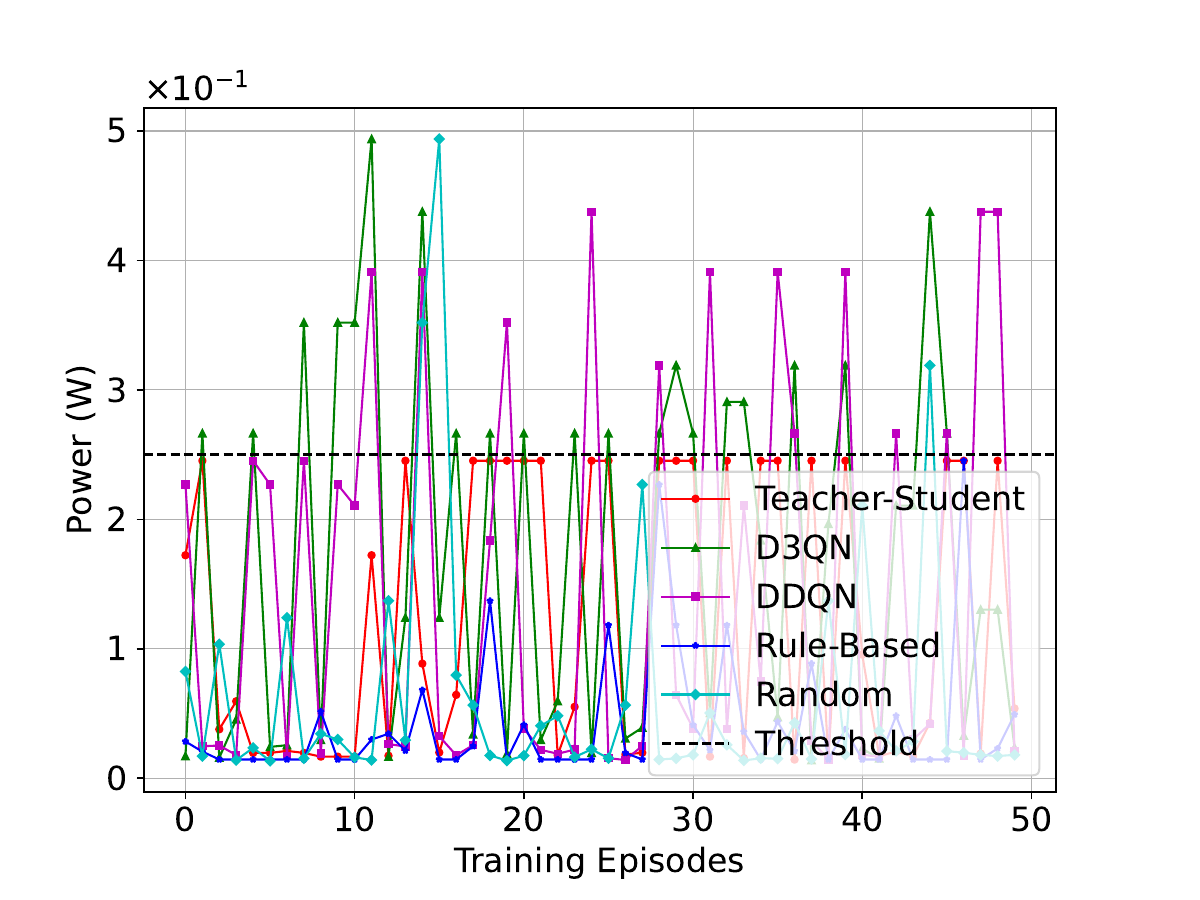}}
  \hspace{-0.5em} 
  \subfloat[\label{3d}]{\includegraphics[width=0.43\linewidth]{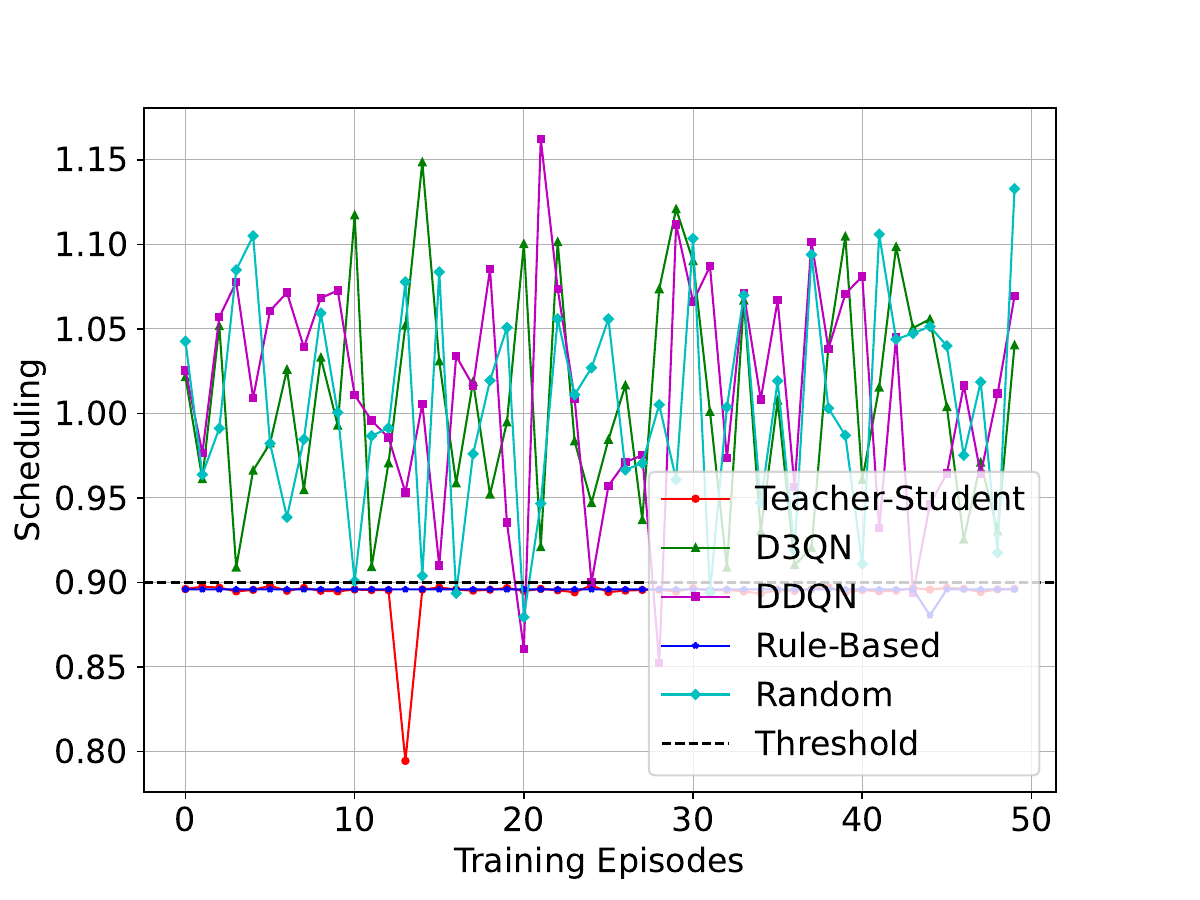}}
  \caption{(a) Training, (b) testing, (c) power violation, and (d) scheduling violation results for different algorithms in a network of 50 nodes with $\mathit{\alpha} = 81 \text{ ms}$.}
  \label{50_nodes_0081} 
\end{figure*}

\begin{figure*}[t] 
    \centering
  \subfloat[\label{4a}]{%
       \includegraphics[width=0.43\linewidth]{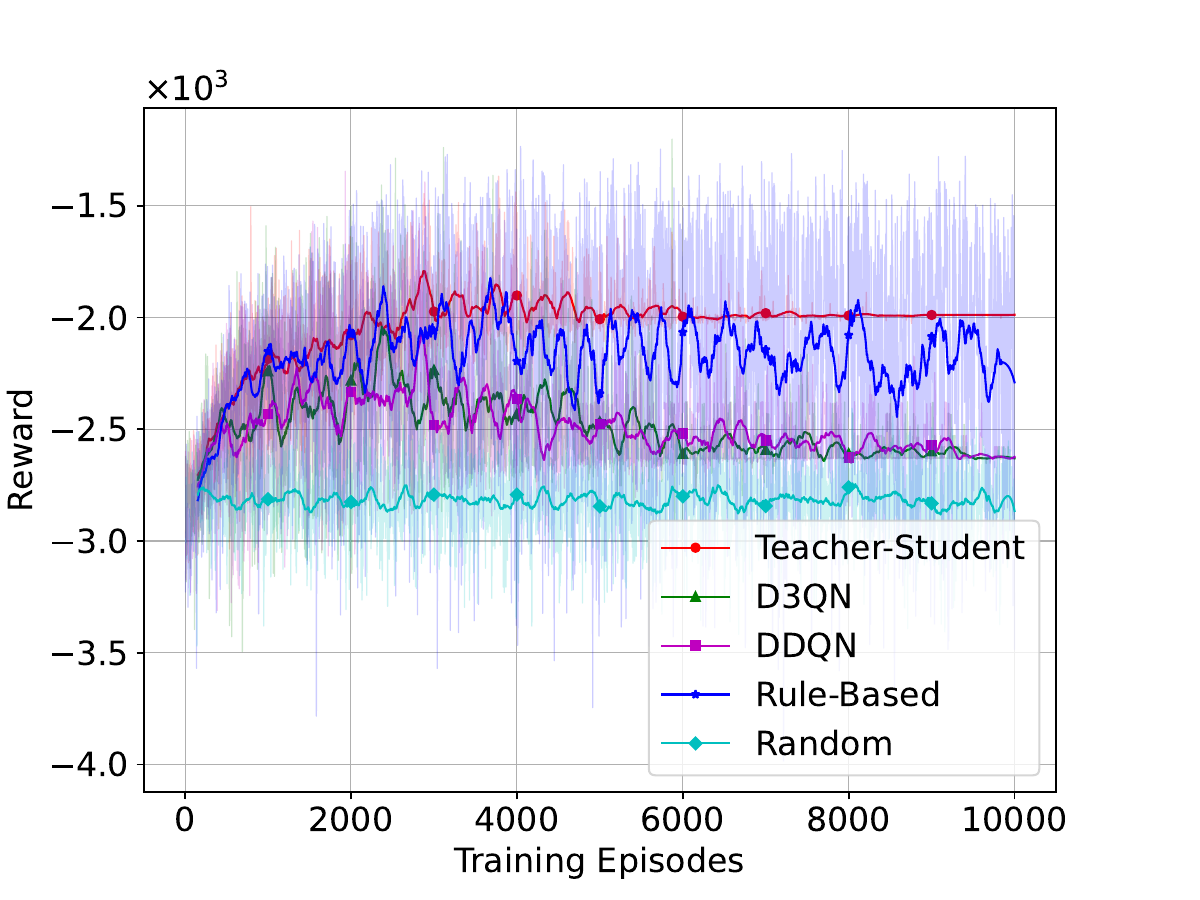}}
    \hspace{-0.5em} 
  \subfloat[\label{4b}]{%
        \includegraphics[width=0.43\linewidth]{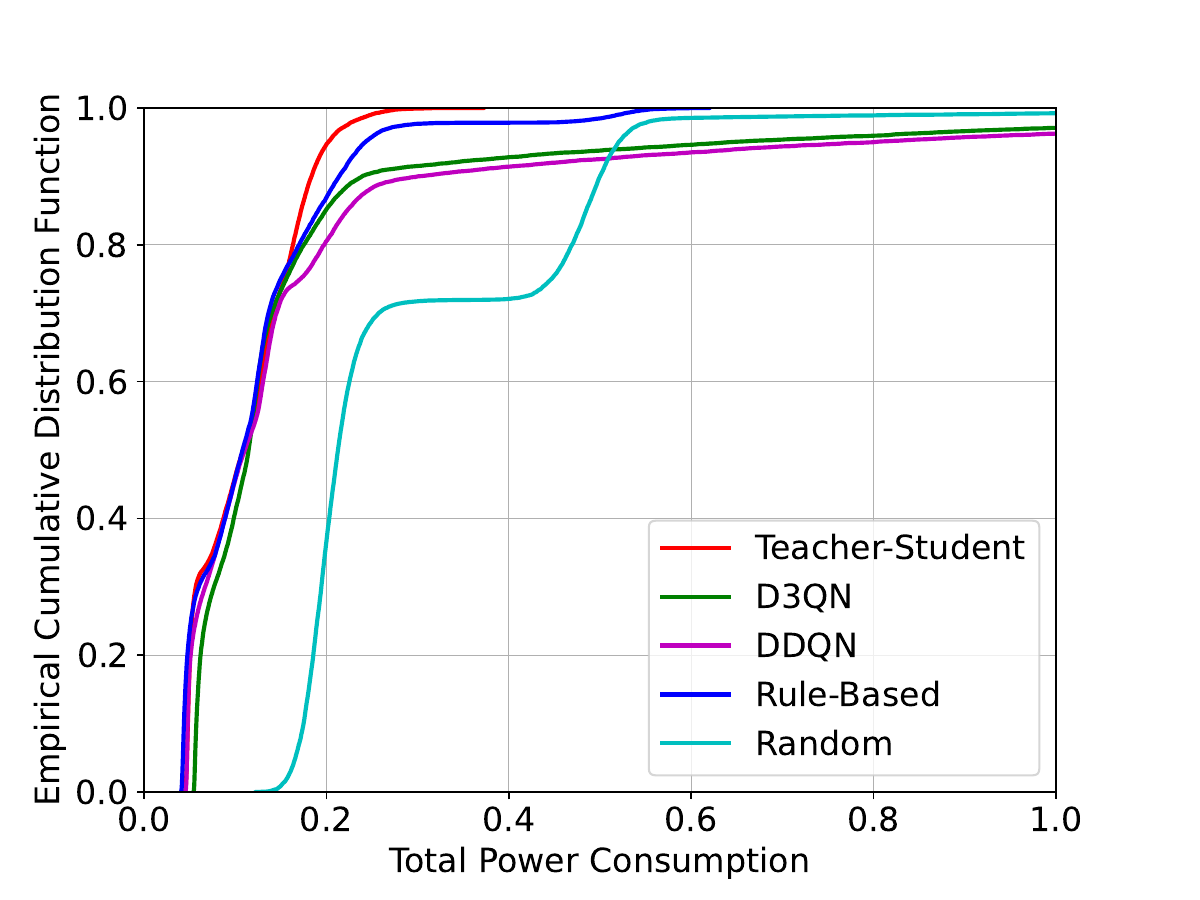}}
    \vspace{-1.3em} 
    \\
  \subfloat[\label{4c}]{%
        \includegraphics[width=0.43\linewidth]{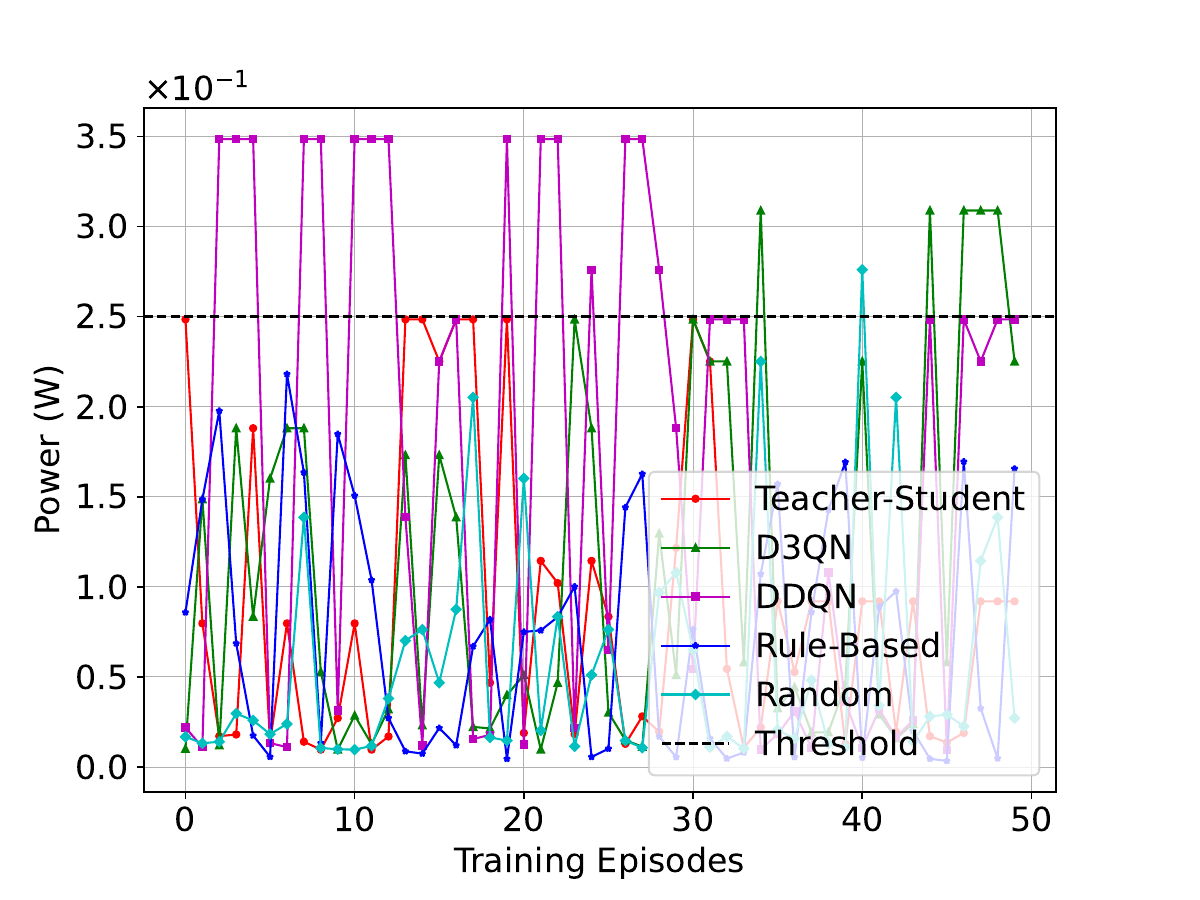}}
    \hspace{-0.5em} 
  \subfloat[\label{4d}]{%
        \includegraphics[width=0.43\linewidth]{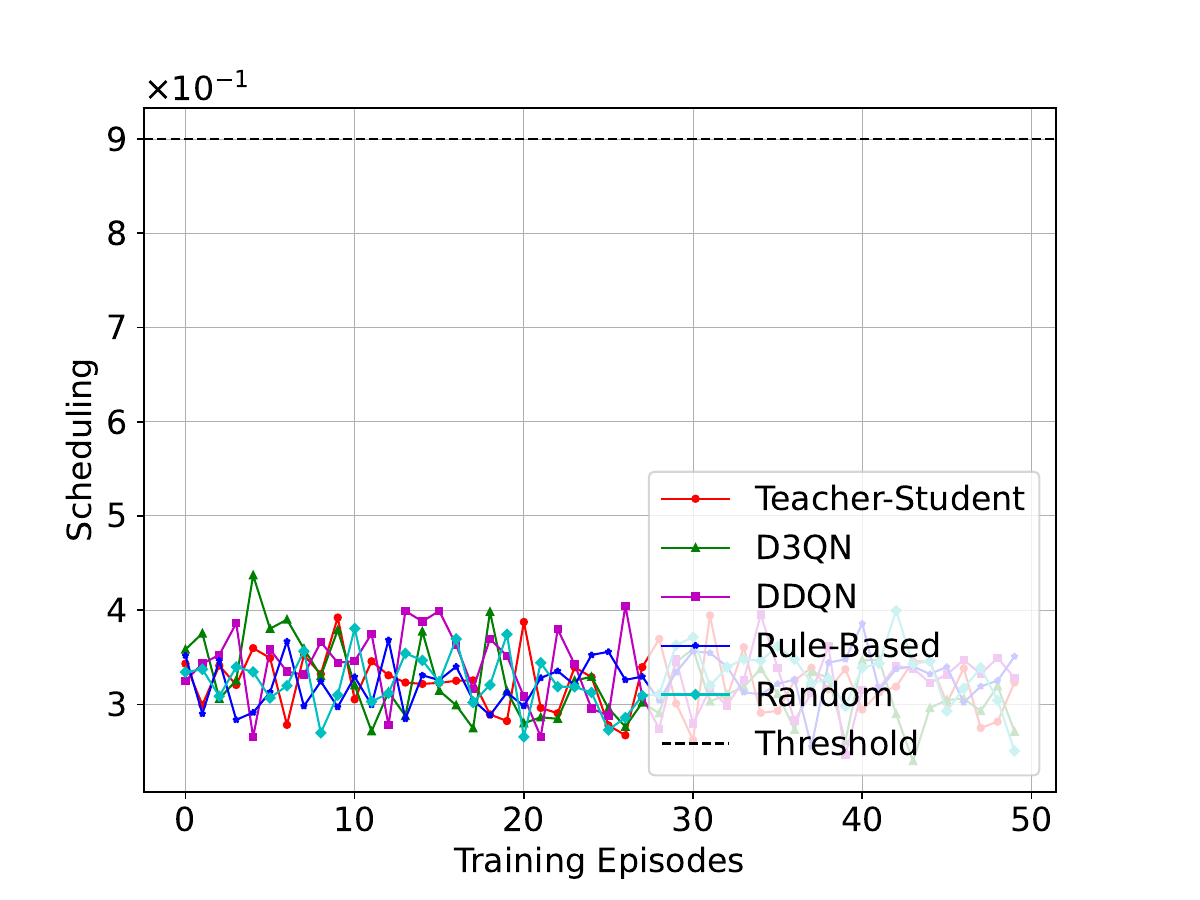}}
  \caption{(a) Training, (b) testing, (c) power violation, and (d) scheduling violation results for different algorithms in a network of 20 nodes with $\mathit{\alpha} = 101 \text{ ms}$.}
  \label{20_nodes} 
\end{figure*}
\subsection{Performance Comparison of Algorithms}
Fig.~\ref{50_nodes_099}(a) shows the training convergence of the reward for different algorithms in a network of 50 nodes with $\mathit{\alpha} = 101 \text{ ms}$. The proposed algorithm demonstrates significantly superior performance compared to benchmark algorithms, showing faster convergence, higher rewards, and greater stability. This is due to its ability to intelligently select actions that satisfy system constraints, leading to higher rewards. D3QN performs slightly better than DDQN in terms of average reward, benefiting from its robust architecture and capability to identify patterns in a stochastic environment. On the other hand, the rule-based D3QN exhibits instability as it resorts to random actions whenever a rule is violated. Fig.~\ref{50_nodes_099}(b) presents the empirical cumulative distribution function (CDF) of the total power consumption (in Watts) obtained during the testing phase using the parameters saved at the end of training. The x-axis indicates the total power consumed per episode, while the y-axis represents the corresponding cumulative probability. This visualization enables a direct comparison of the energy efficiency of the methods by illustrating how frequently each achieves lower power usage. The results align closely with the outcomes observed during training. The teacher-student method demonstrates lower power consumption compared to the benchmark algorithms, with its primary advantage being strict compliance with system constraints, an attribute not observed in the D3QN, DDQN, and random approaches. On the other hand, rule-based, D3QN, and DDQN methods exhibit similar power consumption distributions. Finally, the random approach shows the highest power consumption, highlighting its inefficiency relative to the other methods.

The power violation plot for a single node over 50 episodes is shown in Fig. \ref{50_nodes_099}(c). Both the teacher-student and rule-based algorithms consistently satisfy the power constraint for all nodes. Although the rule-based D3QN guarantees constraint satisfaction, it exhibits high variance and instability due to its random action regeneration strategy, which becomes ineffective in complex real-time scenarios. In contrast, the D3QN, DDQN, and random benchmarks repeatedly breach the power threshold for this node. To further evaluate the power violation performance across different algorithms, two metrics are used: the episode violation rate, which indicates the percentage of episodes where any node breaches the constraint, and the average node violation rate, which provides a granular view of constraint adherence across all nodes. For the D3QN algorithm, the episode violation rate is 80.7\%, with an average node violation rate of 1.72\%. Similarly, the DDQN algorithm records an episode violation rate of 83.9\% and an average node violation rate of 1.68\%. In contrast, the random approach shows significantly lower values, with an episode violation rate of 4.08\% and an average node violation rate of only 0.093\%. The high episode violation rates observed in the D3QN and DDQN algorithms primarily result from their reward design, which prioritizes minimizing the objective function over satisfying power constraints for certain nodes. In DRL frameworks, as the reward is defined as the sum of the objective and penalties for constraint violations, the agent may favor actions that optimize the objective while accepting penalties for breaches. This trade-off often causes the agent to prioritize the objective at the expense of strict adherence to the constraints. The scheduling violation plot is shown in Fig. \ref{50_nodes_099}(d). Similarly, the teacher-student and rule-based algorithms satisfy the constraint in all of the episodes. However, the D3QN, DDQN, and random benchmarks exceed the threshold, with violation rates of 0.09\%, 0.1\%, and 4.62\%, respectively. The relatively low violation probabilities of D3QN and DDQN reflect the less stringent nature of the scheduling constraint in this setup. It is important to note that the PAoI violation probability constraint is inherently satisfied across all methods, as it is enforced in the optimization theory stage. Since the optimality conditions provided in Lemmas 1 and 2 are applied to every approach, none of the tested methods violates PAoI constraints during execution. In contrast, a pure DRL approach without optimization theory would struggle to maintain this guarantee, as it lacks an explicit mechanism for enforcing constraint satisfaction. This issue was observed in \cite{hamida_amir_2024}, where pure DRL exhibited worse performance due to frequent violations. As a result, we did not include a pure DRL baseline for comparison, as it would not provide a fair evaluation under strict PAoI constraints.

\subsubsection{Impact of PAoI}
This section examines the effect of varying the PAoI threshold value, $\mathit{\alpha}$, on the performance of the algorithms during the training and testing phases. Fig. \ref{50_nodes_0081}(a) shows the training convergence of the reward for different algorithms in a network of 50 nodes with $\mathit{\alpha} = 81 \text{ ms}$. Compared to the earlier scenario, the stricter PAoI violation probability and scheduling constraints result in higher penalties. Consequently, the rule-based algorithm achieves better performance than the DDQN, DQN, and random benchmarks. However, the proposed teacher-student approach outperforms all algorithms in terms of both convergence and average reward, thanks to its logical and safe decision-making mechanism. Fig. \ref{50_nodes_0081}(b) shows the testing phase results for the algorithms. Under the strict PAoI condition, the teacher-student and rule-based approaches significantly outperform the DDQN, DQN, and random methods, as meeting the requirements becomes both more challenging and more rewarding. This result aligns with the observations from the training convergence graph. As anticipated, the stringent PAoI constraint also leads to an increase in total power consumption to ensure successful transmissions. Overall, the proposed teacher-student method once again demonstrates superior performance during the testing phase, surpassing all other algorithms.

Fig. \ref{50_nodes_0081}(c) and Fig. \ref{50_nodes_0081}(d) indicate that D3QN, DDQN, and random approaches frequently violate power and scheduling constraints, exceeding the specified thresholds. Specifically, the episode violation percentages are 87.4\%, 87.1\%, and 4.2\% for the D3QN, DDQN, and random approaches, respectively, with average node violation rates of 1.75\%, 1.74\%, and 0.089\%. Both D3QN and DDQN often exceed the maximum transmit power limit of certain nodes to optimize overall system performance. In terms of scheduling violations, the violation percentages are 2.6\%, 2.3\%, and 96\% for D3QN, DDQN, and random benchmarks, respectively. The random method exhibits a high violation rate due to the majority of its action combinations failing to meet the constraints. Conversely, D3QN and DDQN gradually learn the constraints over time, resulting in fewer violations as the high penalties discourage constraint breaches. Nonetheless, the rule-based and teacher-student methods consistently satisfy both power and scheduling requirements, even under stricter conditions, demonstrating their robustness and reliability in maintaining compliance with the increasingly challenging constraints.

\subsubsection{Impact of the Number of Nodes}
We analyze the effect of the number of nodes on the performance of the proposed algorithms. Fig. \ref{20_nodes}(a) shows the training convergence of the reward for a network of 20 nodes with $\mathit{\alpha} = 101 \text{ ms}$. As in previous setups, the proposed teacher-student framework outperforms others in terms of stability, convergence, and average reward. However, total power consumption results display greater fluctuations as the number of nodes decreases. This is because, in more relaxed scenarios, intricate approaches like the teacher-student framework may overfit during training, hindering optimal outcomes during testing. The rule-based method shows an unstable reward curve due to its reliance on random action selection, emphasizing the importance of action advice from a control mechanism. Meanwhile, DDQN and DQN benchmarks exhibit negligible differences in performance. Fig. \ref{20_nodes}(b) presents the testing outcomes using the saved training parameters. Except for the random method, all algorithms demonstrate similar performance in total power consumption due to the relaxed constraints. Teacher-student, rule-based, DDQN, and DQN methods consume less total power than the random benchmark. However, the teacher-student framework achieves the best results. In general, total power consumption decreases as the number of nodes is reduced.

As shown in Fig. \ref{20_nodes}(c), only teacher-student and rule-based methods consistently satisfy the power constraint. In contrast, the D3QN, DDQN, and random approaches exhibit episode violation rates of 77.7\%, 77.9\%, and 1.75\%, respectively, with corresponding average node violation rates of 3.89\%, 4.01\%, and 0.90\%. Similar to earlier scenarios, D3QN and DDQN fail to adequately prioritize constraints in the reward design, leading to higher violation rates. The reduced number of nodes results in lower episode violation rates, as the likelihood of encountering a violation in an episode decreases. However, average node violation rates increase because each node's behavior has a more significant impact on the overall metric. With only 20 nodes, the scheduling requirement becomes more relaxed and easier to satisfy, as reflected in Fig. \ref{20_nodes}(d). Unlike previous cases, none of the algorithms violate the scheduling constraint due to the less stringent setup. However, D3QN, DDQN, and random approaches lack safety mechanisms, which could lead to constraint violations under different parameter settings.

\subsubsection{Runtime Performance}
Fig. \ref{complexity} shows the average simulation time per iteration, measured in milliseconds, for each algorithm. The execution times of all algorithms increase nearly linearly with the number of nodes. As expected, the random approach exhibits the lowest complexity. DDQN follows with slightly higher complexity due to its architecture. D3QN and the rule-based benchmarks require more time as they introduce additional processing steps. The proposed teacher-student method incurs a marginal increase in simulation time due to its advice mechanism. However, this minor overhead is justified by its faster convergence and stable system performance, making it highly suitable for real-time applications.

\begin{figure}[htbp]
\centerline{\includegraphics[width = .4\textwidth]{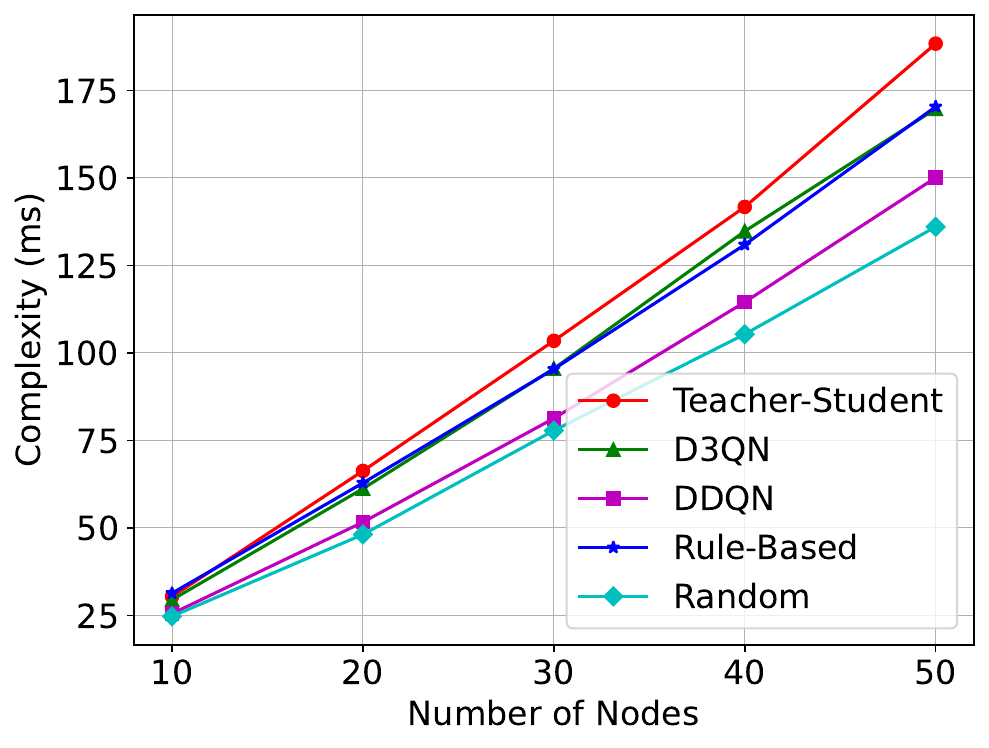}}
\caption{Average execution time (ms) per step for different algorithms and number of nodes.}
\label{complexity}
\end{figure}

\section{Conclusion} \label{conclusion}
This paper introduces a novel optimization theory-based safe DRL approach for the joint optimization of control and communication systems, incorporating a uniquely derived PAoI violation probability constraint. After formulating the joint optimization problem, we derive the optimality conditions to simplify and decompose the problem. The simplified problem is then solved using D3QN within a teacher-student learning framework, where the teacher guides the student by providing action advice that aligns best with the student's actions while satisfying system constraints. To benchmark our approach, we utilize rule-based D3QN, D3QN, DDQN, and random methods. The constraint violation percentages for D3QN, DDQN, and random approaches are presented for different parameter settings, showing that D3QN and DDQN exhibit similar violation rates, indicating that improvements in architecture have minimal impact on constraint violations. In contrast, the rule-based D3QN consistently satisfies the constraints but exhibits unstable performance due to its random advice mechanism. Our proposed algorithm outperforms the benchmarks across various node configurations and PAoI violation probability constraints, all while adhering to system requirements with only a negligible increase in computational complexity. The algorithm demonstrates stable performance due to the strategic advice mechanism integrated with the D3QN model’s action selection behavior. As future work, we aim to enhance the algorithm by incorporating advanced learning techniques such as meta-learning and transfer learning to improve the adaptability of the safe DRL method to new environments, reducing the need for extensive retraining.

\bibliographystyle{IEEEtran}


\begin{IEEEbiography}[{\includegraphics[width=1in,height=1.25in,clip,keepaspectratio]{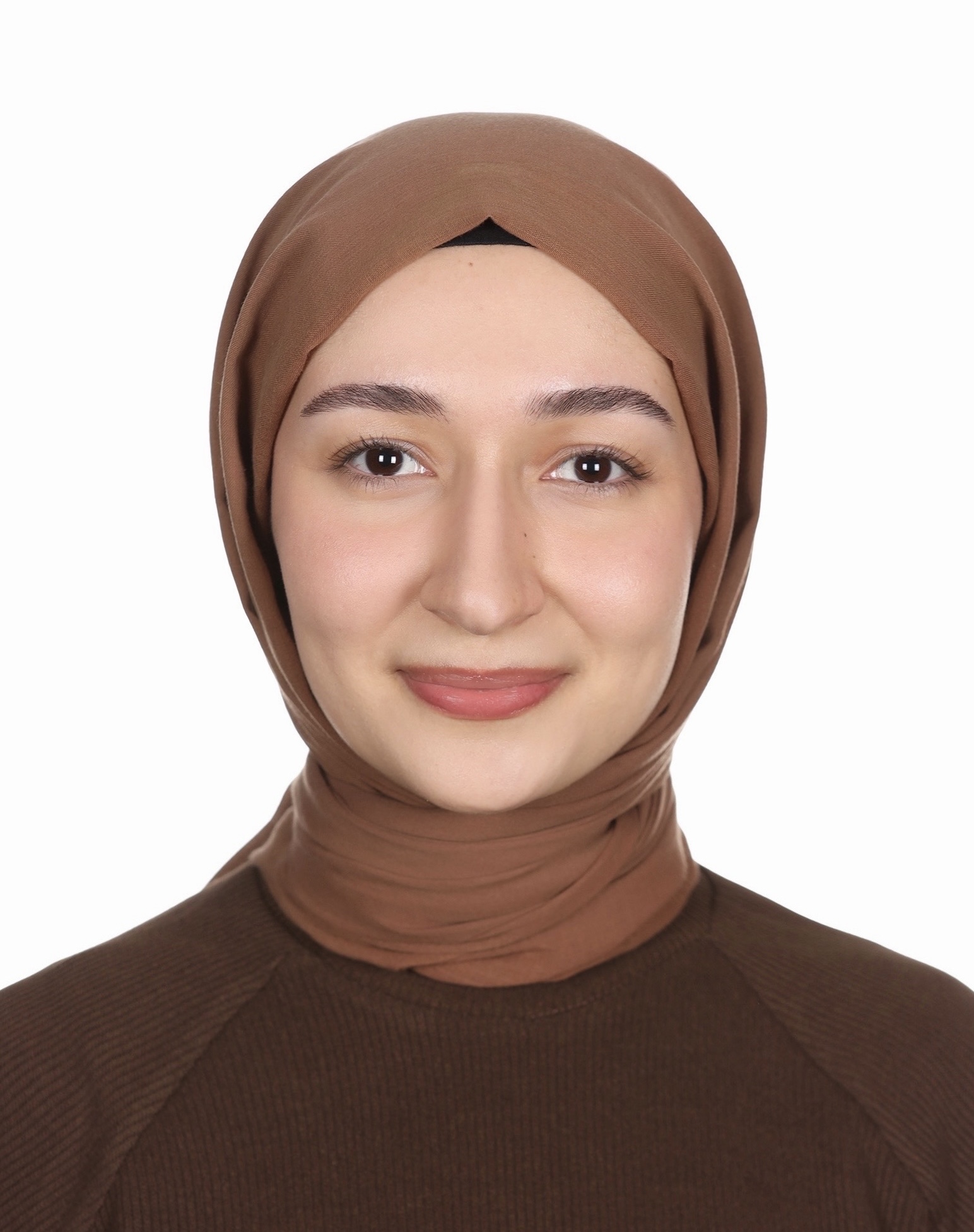}}]{Berire Gunes Reyhan}
received the B.Sc. degree in electrical and electronics engineering from Istanbul Medipol University, Istanbul, Turkey, in 2022. She is currently pursuing the M.Sc. degree in the Department of Electrical and Electronics Engineering at Koc University, Istanbul, Turkey. She is also a Research Assistant with the Wireless Networks Laboratory (WNL). Her research interests include 6G wireless communications and networking, machine learning for wireless networks, and wireless networked control systems.
\end{IEEEbiography}

\begin{IEEEbiography}[{\includegraphics[width=1in,height=1.25in,clip,keepaspectratio]{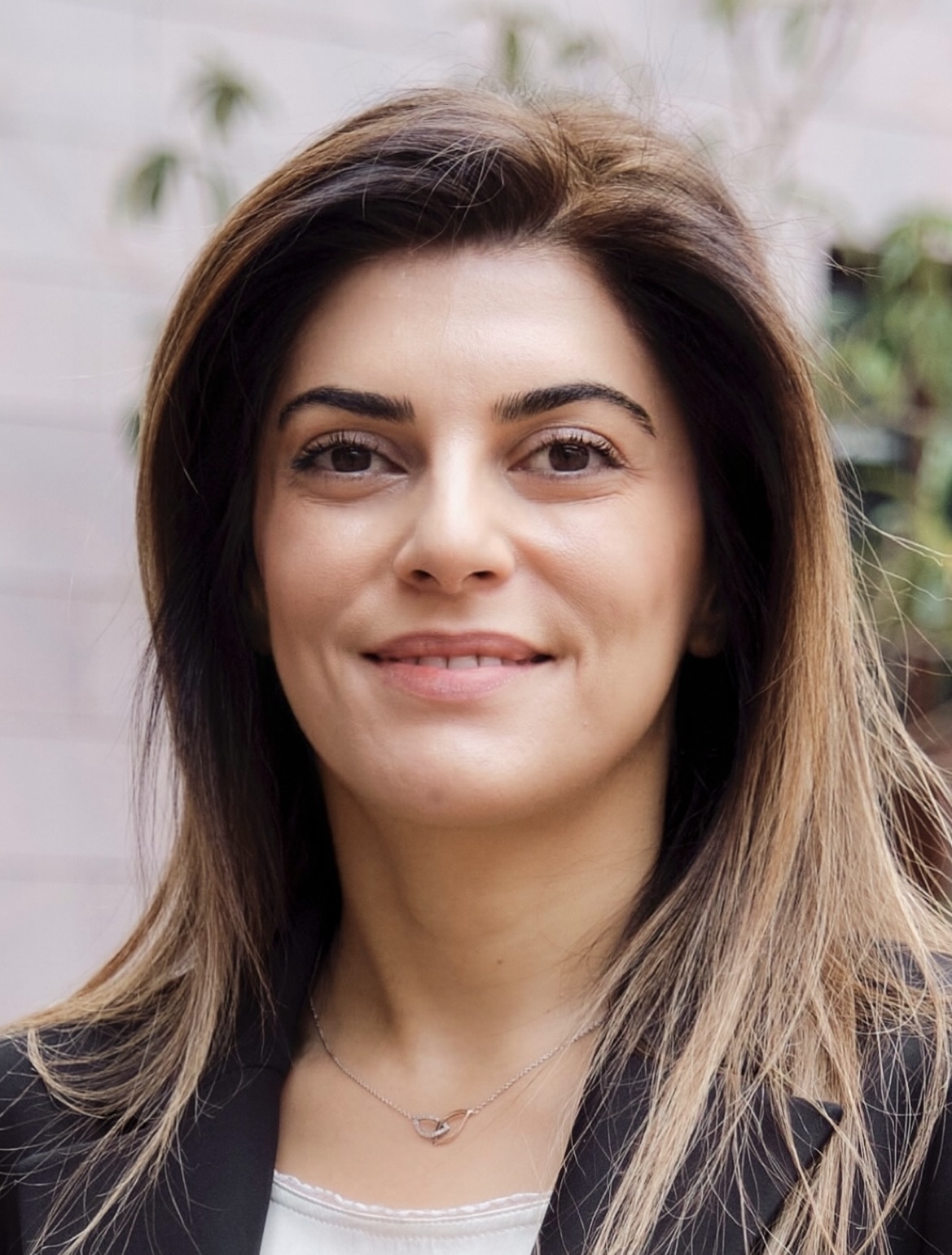}}]{Sinem Coleri}
 is a Professor in the Department of Electrical and Electronics Engineering at Koc University. She is also the founding director of Wireless Networks Laboratory (WNL) and director of Ford Otosan Automotive Technologies Laboratory. Sinem Coleri received the BS degree in electrical and electronics engineering from Bilkent University in 2000, the M.S. and Ph.D. degrees in electrical engineering and computer sciences from University of California Berkeley in 2002 and 2005. She worked as a research scientist in Wireless Sensor Networks Berkeley Lab under sponsorship of Pirelli and Telecom Italia from 2006 to 2009. Since September 2009, she has been a faculty member in the department of Electrical and Electronics Engineering at Koc University. Her research interests are in 6G wireless communications and networking, machine learning for wireless networks, machine-to-machine communications, wireless networked control systems and vehicular networks. She has received numerous awards and recognitions, including TUBITAK (The Scientific and Technological Research Council of Turkey) Science Award in 2024; N2Women: Stars in Computer Networking and Communications in 2022; TUBITAK Incentive Award and IEEE Vehicular Technology Society Neal Shepherd Memorial Best Propagation Paper Award in 2020; Outstanding Achievement Award by Higher Education Council in 2018; and Turkish Academy of Sciences Distinguished Young Scientist (TUBA-GEBIP) Award in 2015.  Dr. Coleri currently holds the position of Editor-in-Chief at the IEEE Open Journal of the Communications Society.  Dr. Coleri is an IEEE Fellow, AAIA Fellow and IEEE ComSoc Distinguished Lecturer.
\end{IEEEbiography}

\end{document}